\documentclass{article}

\usepackage{arxiv}
\usepackage[utf8]{inputenc}
\usepackage[T1]{fontenc}
\usepackage{graphicx}
\usepackage{multirow}
\usepackage{amsmath,amssymb,amsfonts}
\usepackage{amsthm}
\usepackage{mathrsfs}
\usepackage{xcolor}
\usepackage{textcomp}
\usepackage{manyfoot}
\usepackage{booktabs}
\usepackage{tabularx}
\usepackage{placeins}
\usepackage{pdflscape}
\usepackage{rotating}
\usepackage{array}
\usepackage{xurl}
\usepackage{algorithm}
\usepackage{algorithmic}
\usepackage{float}
\usepackage{microtype}
\usepackage[authoryear]{natbib}
\usepackage{authblk}
\usepackage[
  colorlinks=true,
  linkcolor=blue,
  citecolor=blue,
  urlcolor=blue
]{hyperref}
\usepackage{doi}
\setcitestyle{aysep={,}}

\renewcommand{\headeright}{arXiv Preprint}
\renewcommand{\undertitle}{arXiv Preprint}
\renewcommand{\shorttitle}{H-RegAd}

\theoremstyle{plain}
\newtheorem{theorem}{Theorem}
\newtheorem{lem}{Lemma}

\theoremstyle{remark}
\newtheorem{rmk}{Remark}

\theoremstyle{definition}
\newtheorem{assumption}{Assumption}
\newtheorem{definition}{Definition}

\newenvironment{pf}{\begin{proof}}{\end{proof}}

\newcommand{\abs}[1]{\left\lvert #1 \right\rvert}
\newcommand{\inner}[2]{\left\langle #1,\ #2 \right\rangle}

\title{Robust Log-Contrast Regression for High-Dimensional Compositional Microbiome Data with Outliers}

\author[1]{Xuke Hua\thanks{Email: \texttt{xkhua@henu.edu.cn}.}}
\author[2]{Yunhai Xiao\thanks{Email: \texttt{yhxiao@henu.edu.cn}.}}
\author[1]{Xin Xin\thanks{Corresponding author. Email: \texttt{xinxin@henu.edu.cn}. ORCID: \href{https://orcid.org/0009-0002-5933-4083}{0009-0002-5933-4083}.}}
\affil[1]{School of Mathematics and Statistics, Henan University, Kaifeng 475000, China}
\affil[2]{Center for Applied Mathematics of Henan Province, Henan University, Zhengzhou 450046, China}
\date{}

\hypersetup{
  pdftitle={Robust Log-Contrast Regression for High-Dimensional Compositional Microbiome Data with Outliers},
  pdfauthor={Xuke Hua, Yunhai Xiao, Xin Xin},
  pdfkeywords={Compositional microbiome data, Huber loss, Log-contrast regression, Non-asymptotic property, Alternating direction method of multipliers}
}

\begin{document}
\maketitle

\begin{abstract}
It is well known that microbiome sequencing data are naturally compositional and often have a complex subcomposition structure, as well as a lot of noise and some outliers. These characteristics make classical log-contrast regression difficult to apply, because it is sensitive to extreme values and cannot fully meet the linear constraints of compositional predictors.
To tackle these issues, in this paper, we propose a robust log-contrast regression framework that incorporates the Huber loss function along with a Lasso penalty, which not only boosts resistance to outliers but also facilitates sparse variable selection.
It is important to note that the general linear constraints, which are explicitly imposed, serve to preserve the algebraic coherence of compositional and subcompositional covariates.
Under some mild assumptions, we theoretically prove that the proposed method achieves non-asymptotic guarantees in finite samples, comparable to those of Huber loss estimators, ensuring robustness.
To implement the method efficiently, we develop a symmetric Gauss--Seidel-based alternating direction method of multipliers based on its dual formulation with a convergence guarantee, and it is also easy to implement in the sense that each subproblem admits a closed-form solution.
To evaluate the statistical performance of the proposed method, we conduct a series of numerical experiments using simulated data and gut microbiome data, which show that the proposed method achieves better estimation accuracy, competitive prediction performance, and more reliable outlier identification than the benchmark method RobRegCC.
\end{abstract}

\keywords{Compositional microbiome data \and Huber loss \and Log-contrast regression \and
Non-asymptotic property \and Alternating direction method of multipliers}
	
	\section{Introduction}
	It is well known that compositional data are common in many areas of science, such as the relative abundances of microbial taxa in the human microbiome, the chemical composition of rocks and sediments in geology, and abundance profiles in microbial ecology.
	In this context, high-throughput sequencing techniques like target amplicon sequencing and metagenomic sequencing have made it easier to study microbial communities \citep{Callahan2016DADA2}.
	Specifically, these techniques facilitate the characterization of microbial communities in various environments from the marine ecosystem  to the human gut.
	Furthermore, with the help of modern bioinformatics tools, sequencing reads are usually processed into sparse compositional profiles that reflect the relative abundances of microbial groups \citep{Caporaso2010QIIME,callahan2017exact}. Therefore, these compositional profiles have become an important basis for subsequent statistical analysis in many ecological and biomedical studies.
	
	In many microbiome studies, these compositional data are analyzed together with environmental or host-related covariates, such as soil physicochemical properties, clinical or physiological measurements, metabolomics data, and flow cytometry profiles. 
	To characterize the composition, sequencing reads are often aligned to reference databases using bioinformatics tools such as MetaPhlAn \citep{Segata2012MetaPhlAn}. 
	Because the amount of extractable DNA can vary substantially across samples, the total number of reads often differs considerably.
	To address this issue, researchers  commonly normalize sequencing data and convert it into the relative abundances of microbial taxa within each sample. 
	We note that this preprocessing step turns the data into high-dimensional compositions with a unit-sum constraint, so that only relative information is retained. 
	However, in practice, many widely used metagenomic profiling tools, such as MEGAN \citep{Huson2007MEGAN} and MetaPhlAn \citep{Segata2012MetaPhlAn}, mainly provide relative abundance profiles at various taxonomic levels.
	As a result, statistical analysis of microbiome data must account for the special structure of compositional data, instead of treating them as ordinary multivariate data.
	
	In microbiome research, a key question is how microbial compositions are related to environmental factors or host traits. 
	However, because microbiome data contain only relative abundance rather than absolute counts, the standard linear regression approach cannot be used directly. To address this issue, a commonly used approach is the log-contrast regression model, which was introduced by \cite{Aitchison1984LogContrast} in the context of mixture experiments.
	In this approach, a continuous response is modeled as a linear combination of the log-transformed compositional components, with the regression coefficients constrained to sum to zero. As a result, this model provides a natural way to study the relationship between microbial composition and environmental or host-related variables.
	Furthermore, in microbiome studies, the log-contrast regression has been extended to high-dimensional settings, such as $\ell_{1}$-penalized methods for variable selection \citep{Lin2014CompositionalSelection} and other flexible structured approaches \citep{James2020PenalizedOptimization,Shi2016MicrobiomeRegression,Sun2020FunctionalLogContrast,Mishra2022RobustCompositional,Monti2024RobustKnockoff,Rios2025COCRegression, monti2025robust, zeng2026penalized}.
	
	From a biological point of view, it is often not enough to study only the overall composition, as the subcomposition structure within higher-level taxonomic groups can also be highly informative. For example, within a given phylum or genus, the relative proportions of the corresponding lower-level taxa may reflect important ecological features, such as community stability, functional redundancy, or early signs of dysbiosis.
	Therefore, understanding how these subcompositional patterns relate to phenotypic or environmental outcomes is important for revealing finer-scale microbial mechanisms \citep{Shi2016MicrobiomeRegression}.
	It should be emphasized that, after normalization, each subcomposition still sums to one and remains compositional in nature. As a result, the analysis of subcompositions must still respect the simplex structure and employ models that properly account for the constraints of compositional microbiome data.
	
	Practically, microbiome sequencing data are often affected by uneven sequencing depth, contamination, instrumental noise, and high proportions of zeros, which may give rise to extreme or outlier observations.
	In such cases, using the standard least-squares loss function may lead to materially distorted estimates, as the squared error is highly sensitive to outliers, see e.g.,  \cite{Shi2016MicrobiomeRegression} and \cite{Wang2017SubcompositionSelection}.
	Additionally, as noted by \cite{Mishra2022RobustCompositional} that, this sensitivity can impact not only coefficient estimation but also variable selection.
	In this paper, motivated by the noise and occasional outliers commonly present in compositional microbiome data, we incorporate the Huber loss function \citep{Huber1964RobustLocation} into subcomposition regression to enhance the robustness of the estimation.
	It is well known that the key feature of the Huber loss function is that it retains the efficiency of the squared loss for moderate errors while reducing the influence of extreme observations, which leads to more robust estimation and more reliable variable selection.
	Additionally, we incorporate a Lasso penalty \citep{tibshirani1996regression} for variable selection and provide theoretical guarantees showing that the proposed estimation method achieves strong non-asymptotic performance in finite samples under mild assumptions.
	To practically implement the robust log-contrast regression method, we employ the symmetric Gauss--Seidel--based alternating direction method of multipliers (sGS-ADMM) \citep{Li2018BlockSymmetricGS}, which is convergence-guaranteed and also easily implementable in the sense that all involved subproblems have closed-form solutions.
	It is worth noting that the proposed sGS-ADMM, based on the dual formulation, is generally efficient for tackling high-dimensional statistical problems. To assess the statistical efficiency of the proposed estimation method and the computational efficiency of the sGS-ADMM algorithm, we conduct a series of numerical experiments using both simulated data and real-world gut microbiome data. Additionally, we compare the performance of our method, named H-RegAd, with the representative robust method RobRegCC, as introduced in \cite{Mishra2022RobustCompositional}.
	The numerical results demonstrate that the proposed method achieves superior estimation accuracy, more competitive predictive performance, and more reliable outlier identification than RobRegCC.
	
	The remainder of this paper is organized as follows.  
	In Section \ref{sec:method}, we review existing log-contrast regression methods with linear constraints and introduce a robust log-contrast regression approach that incorporates the Huber loss function and a Lasso penalty.  
	In Section \ref{sec:framework}, we present the sGS-ADMM algorithm, which is practically implemented for Lasso penalized Huber loss function under general linear compositional constraint.  
	In Section \ref{sec:analysis}, we investigate the non-asymptotic statistical properties of the proposed method under certain regularity conditions.  
	In Section \ref{sec:simulation}, we report the results of numerical experiments conducted on simulated data.  
	In Section \ref{sec:realdata}, we demonstrate the practical application of the method using a real-world gut microbiome dataset.  
	Finally, in Section \ref{sec:discussion}, we conclude the paper with some discussions and remarks.

	\section{Robust Log-Contrast Regression}
	\label{sec:method}
	\subsection{Standard Log-Contrast Regression}
	To formulate the log-contrast regression, let $y \in \mathbb{R}^n$ denote the response vector, and let $X = (x_{ij}) \in \mathbb{R}^{n \times p}$ represent the matrix of compositional covariates, where $x_{ij}$ corresponds to the relative abundance of the $j$-th component in the $i$-th sample. 
	The $i$-th row of $X$, denoted by $x_i = (x_{i1}, \ldots, x_{ip})^\top$, is a $p$-dimensional composition where each $x_{ij} \geq 0$ for $j = 1, \ldots, p$ and $\sum_{j=1}^p x_{ij} = 1$, meaning that each $x_i$ represents relative information.
	Moreover, since zero components often appear in compositional data, they are typically substituted with a small pseudo-count, such as $0.5$ by \cite{Aitchison1982CoDA}.
	Under this setup, each compositional vector $x_i$ lies in the $(p-1)$-dimensional simplex defined as 
	$S^{p-1} \triangleq \{ (x_1, \ldots, x_p) \mid x_j \geq 0, \sum_{j=1}^p x_{j} = 1 \}$.
	Let $\beta \in \mathbb{R}^{p}$ denote the regression coefficient vector, and let the error 
	vector be $\epsilon = (\epsilon_1,\ldots,\epsilon_n)^\top$, where the $\epsilon_i$ are independent and identically distributed (i.i.d.)
	with mean $E(\epsilon_i) = 0$ and variance $\mathrm{Var}(\epsilon_i) = \sigma^2$. 
	The standard linear regression model is then written as
	\begin{equation}\label{eq:standard_model}
		y = X\beta + \epsilon .
	\end{equation}
	It should be noted that directly applying the standard linear regression model \eqref{eq:standard_model} to compositional microbiome data is generally inappropriate. Since the components of each compositional vector are constrained to unit-sum, the predictors are inherently dependent and may exhibit perfect multicollinearity, particularly when an intercept term is included.  
	
	To overcome the limitations of standard linear regression, \cite{Aitchison1984LogContrast} proposed the log-contrast regression framework, where the response variable is modeled using log-ratios relative to a chosen reference component, specifically:
	\begin{equation}\label{eq:logcontrast}
		y = Z_{\setminus k}\beta_{\setminus k} + \epsilon ,
	\end{equation}
	where $Z_{\setminus k}:=\left\{\log\left({x_{ij}}/{x_{ik}}\right):j\neq k\right\}\in \mathbb{R}^{n\times (p-1)}$
	is the log-ratio design matrix formed by taking the $k$-th component as the reference, and
	$\beta_{\setminus k}:=(\beta_1,\ldots,\beta_{k-1},\beta_{k+1},\ldots,\beta_p)^\top\in \mathbb{R}^{p-1}$.
	Clearly we see that although the formulation in \eqref{eq:logcontrast} offers a natural approach to modeling relative information, it relies on the selection of a reference component, and choosing different reference components can result in varying numerical representations.
	To avoid the dependence on an arbitrarily chosen reference component,
	\cite{Lin2014CompositionalSelection} reformulated the log-contrast model
	by introducing a full coefficient vector
	$\beta=(\beta_1,\ldots,\beta_p)^\top$ together with a zero-sum constraint.
	Indeed, taking the $k$-th component as the reference gives
	\[
		\sum_{j\neq k}\beta_j\log\left(\frac{x_{ij}}{x_{ik}}\right)
		=\sum_{j\neq k}\beta_j\log x_{ij}
		-\left(\sum_{j\neq k}\beta_j\right)\log x_{ik}.
	\]
	By setting	$\beta_k=-\sum_{j\neq k}\beta_j$,
	the linear predictor becomes	$\sum_{j=1}^{p}\beta_j\log x_{ij}$, with	$\sum_{j=1}^{p}\beta_j=0$. Thus, the zero-sum constraint naturally	arises from the reference-based log-ratio representation and removes	the dependence on the choice of reference component. Under this	representation, model~\eqref{eq:logcontrast} can be rewritten as
	\begin{equation}\label{eq:constrained_reg}
		y = Z\beta + \epsilon,	\qquad	\mathbf{1}_p^{\top}\beta = 0,
	\end{equation}
	where
	$Z=(z_{ij})=(\log x_{ij})\in\mathbb{R}^{n\times p}$, and $\mathbf{1}_p$ is a $p$-dimensional vector whose components are all equal
	to one. It is worth pointing out that this constrained approach treats all components equally and makes sure our fitted model depends only on the log-ratios of the original components, keeping the essential structure of compositional microbiome data.
	Moreover, in microbiome studies, the number of microbial features often greatly exceeds the sample size, making variable selection essential for identifying taxa linked to the response. In this context, \cite{Lin2014CompositionalSelection} introduced a Lasso-penalized \citep{tibshirani1996regression} estimation method for variable selection, which is formulated as follows:
	\begin{equation}\label{eq:l1_constrained}
		\min_{\beta\in\mathbb{R}^p}
		\left\{
		\frac{1}{2n}\|y-Z\beta\|_2^2
		+
		\lambda\|\beta\|_1
		\right\},
		\quad
		\text{s.t.}
		\quad
		\mathbf{1}_p^\top\beta=0,
	\end{equation}
	where $\lambda>0$ is a tuning parameter. 
	This estimator follows the compositional constraint and allows for sparse variable selection in high-dimensional regression. However, it is still based on squared-error loss and does not directly consider more complex structural information, such as the taxonomic or phylogenetic relationships among microbial taxa.
	\subsection{Subcomposition Log-Contrast Regression }
	
	To move beyond the zero-sum constraint in
	\eqref{eq:l1_constrained},
	\cite{Shi2016MicrobiomeRegression} incorporated the hierarchical
	structure of microbial taxa into the log-contrast regression framework.
	Specifically, in microbiome studies, microbial features are naturally
	organized into a phylogenetic or taxonomic tree, and this structure
	induces multiple linear constraints corresponding to different
	subcompositions. Building on this idea,
	\cite{Shi2016MicrobiomeRegression} proposed the following sparse
	regression model:
	\begin{equation}\label{eq:tree_constrained}
		\min_{\beta\in\mathbb{R}^p}
		\left\{\frac{1}{2n}\|y-Z\beta\|_{2}^{2}+\lambda\|\beta\|_{1}\right\},
		\qquad \text{s.t.}\qquad C^{\top}\beta=0.
	\end{equation}
	where the constraint matrix takes the following form
	\begin{equation}\label{defC}
			C^{\top}
			=[c_{1},c_{2},\ldots,c_{m}]^{\top}
			=
			\begin{pmatrix}
				\mathbf{1}_{p_1}^{\top} & 0 & \cdots & 0\\
				0 & \mathbf{1}_{p_2}^{\top} & \cdots & 0\\
				\vdots & \vdots & \ddots & \vdots\\
				0 & 0 & \cdots & \mathbf{1}_{p_m}^{\top}
			\end{pmatrix}_{m\times p}.
	\end{equation}
	Suppose that the $p$ microbial features are divided into $m$	subcompositions, where the $j$-th subcomposition contains $p_j$	features and $\sum_{j=1}^{m}p_j=p$. Accordingly, the coefficient vector	can be partitioned as	$\beta=	\left(	\beta_{(1)}^{\top},	\beta_{(2)}^{\top},	\ldots,	\beta_{(m)}^{\top}	\right)^{\top}	$ with	$\beta_{(j)}\in\mathbb{R}^{p_j}	$.
	To preserve subcompositional coherence, a separate zero-sum constraint	is imposed on the coefficients within each subcomposition $\mathbf{1}_{p_j}^{\top}\beta_{(j)}=0$ for $j=1,\ldots,m$. 	
	Collecting these $m$ constraints gives $C^{\top}\beta=0$. Specifically,
	\[
	C^{\top}\beta	=
	\begin{pmatrix}
		\mathbf{1}_{p_1}^{\top}\beta_{(1)}\\[2mm]
		\mathbf{1}_{p_2}^{\top}\beta_{(2)}\\[2mm]
		\vdots\\[2mm]
		\mathbf{1}_{p_m}^{\top}\beta_{(m)}
	\end{pmatrix}
	= \textbf{0}_{m},
	\]
	where $\textbf{0}_{m}$ is a $m$-dimensional zero vector. Therefore, the constraint $C^{\top}\beta=0$ requires the coefficients	within each subcomposition to sum to zero. Clearly, this formulation extends the standard log-contrast regression model~\eqref{eq:l1_constrained} by replacing the single global zero-sum
	constraint with multiple group-specific zero-sum constraints, thereby preserving the subcompositional structure within the taxonomic hierarchy.
	Moreover, to mitigate the bias introduced by the Lasso penalty, \cite{Shi2016MicrobiomeRegression} developed a debiasing procedure for statistical inference, the details of which are omitted here.
	
	It should be noted that there is other related work on incorporating hierarchical structures. For example, \cite{Wang2017SubcompositionSelection} proposed a tree-guided variable selection method that promotes structured sparsity along the taxonomic tree. More precisely, 
	\begin{equation}\label{eq:tree_guided}
		\min_{\beta\in\mathbb{R}^p}
		\left\{\frac{1}{2n}\|y-Z\beta\|_{2}^{2}
		+\lambda\|J_{\alpha}\beta\|_{1}\right\},
		\qquad \text{s.t.}\qquad \mathbf{1}_{p}^{\top}\beta=0.
	\end{equation}
	where $J_{\alpha}$ is a tree-structured coding matrix that promotes sparsity patterns consistent with the taxonomic hierarchy, and the tuning parameter $\alpha \in [0,1]$ controls the strength of the structural penalty.
	Additionally, \cite{Sun2020FunctionalLogContrast} extended the log-contrast framework to longitudinal microbiome studies by introducing functional compositional predictors, which integrate compositional regression with temporal effects and yield a group-Lasso-penalized regression model:
	\begin{equation}\label{eq:temporal_logcontrast}
		\min_{\beta,\beta_0,\beta_c}
		\left\{\frac{1}{2n}\big\|y-\beta_0\mathbf{1}_n-Z_c\beta_c-Z\beta\big\|_2^2
		+\lambda\sum_{j=1}^p\|\beta_j\|_2\right\},
		\qquad \text{s.t.}\qquad \mathbf{1}_p^{\top}\beta=0.
	\end{equation}
	where $\beta_{0}$ denotes the intercept, $Z_c$ represents additional non-compositional covariates, and $\beta_c$ is the corresponding coefficient vector. Here, $\beta$ is concatenated by subvectors $\{\beta_j\}_{j=1}^p$, where each $\beta_j$ collects the coefficients corresponding to the temporal basis functions of the $j$-th microbial taxon. The group Lasso penalty $\sum_{j=1}^p \|\beta_j\|_2$ facilitates the selection of the entire temporal trajectory for each taxon.
	It is also important to note that both methods extend compositional regression by incorporating taxonomic or temporal structures, but their performance under squared-error loss can still decline in the presence of outliers.

	\subsection{Robust Log-Contrast Regression }
	
	In real microbiome studies, outliers may arise from contamination, measurement error, or other technical artifacts in the sequencing process. 
	These outliers can seriously distort the standard estimators, since most existing compositional regression methods are built on the squared-error loss and are highly sensitive to extreme values. 
To address this issue, \cite{Mishra2022RobustCompositional} proposed a
robust regression framework for compositional covariates, referred to as
RobRegCC. The method augments the log-contrast regression model with a
mean-shift vector $\gamma=(\gamma_1,\ldots,\gamma_n)^\top$ to account for
potentially corrupted observations in the response. Restricting the notation
to compositional covariates, its general regularized formulation can be written as
\begin{equation}\label{eq:RobRegCC}
	\min_{\gamma\in\mathbb{R}^n,\,\beta\in\mathbb{R}^p}
	\left\{\frac{1}{2n}\left\|y-Z\beta-\sqrt{n}\,\gamma\right\|_2^2
	+P_{\lambda_1}^{(1)}(\gamma)+P_{\lambda_2}^{(2)}(\beta)\right\},
	\quad \text{s.t.}\quad C^{\top}\beta=0.
\end{equation}
Here, $P_{\lambda_1}^{(1)}(\gamma)$ and
$P_{\lambda_2}^{(2)}(\beta)$ denote sparsity-inducing penalties on the
mean-shift vector and the regression coefficients, respectively.
RobRegCC considers several penalty specifications, including the
hard-ridge penalty \citep{SheOwen2011Outlier}, the elastic-net penalty
\citep{ZouHastie2005ElasticNet}, and an adaptive elastic-net penalty
motivated by the adaptive Lasso \citep{Zou2006AdaptiveLasso}, leading to
a family of robust estimators.
It is worth noting that the $\ell_1$-penalized mean-shift formulation
has a well-known connection to Huber M-estimation. In the present work, we focus on
the direct constrained Huber formulation and its statistical and
computational treatment under general linear compositional constraints.
	Subsequently, \cite{monti2021sparse} introduced a robust sparse approach for compositional data, called Robust ZeroSum regression, which integrates trimming within the squared-loss function and a Lasso penalty. The estimator is given by
	\begin{equation}\label{eq:RZS}
		\min_{\beta\in\mathbb{R}^{p}}
		\left\{\frac{1}{2n}\sum_{i=1}^{n}w_i(y_i-z_i^{\top}\beta)^2
		+w\lambda\|\beta\|_1\right\},
		\qquad \text{s.t.}\qquad \mathbf{1}_p^{\top}\beta=0.
	\end{equation}
	where $w_i \in \{0,1\}$ indicates whether observation $i$ is retained after trimming and $w=\sum_{i=1}^n w_i$ denotes the proportion of retained observations.
	Finally, it is important to note that there are other related approaches, such as the one proposed by \cite{Srinivasan2021CompositionalKnockoff}, who introduced a compositional screening procedure (CSP) that adapts best-subset selection to the log-contrast regression model, resulting in an $\ell_0$-constrained squared-error loss function. This approach was later improved by \cite{Monti2024RobustKnockoff}, who developed a robust version of the CSP   that simultaneously achieves variable screening and robustness against outliers. More recently, \cite{monti2025robust} developed a robust
	multivariate regression framework for high-dimensional compositional
	microbiome data. Their approach accommodates multiple correlated responses
	and combines robust estimation with variable selection and false discovery
	rate control, thereby improving the stability of feature selection in the
	presence of outliers. \cite{zeng2026penalized} proposed a penalized
	composite quantile regression framework for compositional data under linear
	constraints.
	
Despite these advances, there remains interest in directly formulating
robust log-contrast regression through a robust loss function, without
introducing an explicit mean-shift vector for atypical observations.
Such a formulation provides a convex estimation problem while retaining
the linear structural constraints induced by compositional and
subcompositional covariates, and also leads to a distinct computational
treatment. Motivated by this perspective, we consider a constrained
log-contrast regression model based on the Huber loss
\citep{Huber1964RobustLocation}, defined as follows:
	
	\begin{equation}\label{eq:huber_model}
		\hat\beta=\arg\min_{\beta\in\mathbb{R}^{p}}
		\left\{\frac{1}{n}\sum_{i=1}^{n}h_{\tau}(y_i-z_i^\top\beta)
		+\lambda\|\beta\|_1\right\},
		\qquad \text{s.t.}\qquad C^{\top}\beta=0.
	\end{equation}
	where  $C \in \mathbb{R}^{p \times m}$ defined in \eqref{defC} is the coefficient matrix encoding the linear structure associated with compositional predictors.
	The univariate Huber loss \citep{Huber1964RobustLocation} is defined by
	\begin{equation}\label{eq:huber_def}
		h_{\tau}(u)	=	
		\begin{cases}
			\frac{1}{2}u^{2}, & |u| < \tau, \\[4pt]
			\tau|u| - \frac{1}{2}\tau^{2}, & |u| \geq \tau,
		\end{cases}
	\end{equation}
	where $\tau > 0$ is a tuning parameter controlling the trade-off between efficiency and robustness. Compared with the squared-error loss, the Huber loss is less sensitive to extreme observations while still retaining good efficiency for moderate residuals.
	The method that combines the Huber loss function and Lasso penalty for robust estimation and variable selection in standard linear regression has been studied, for examples, \cite{Sun2020Adaptive, LambertLacroix2011Robust}. But, the approach that adds the hierarchical structure of microbial taxa into log-contrast regression \eqref{eq:huber_model} has not been studied yet.
	
	\section{sGS-ADMM for Robust Log-contrast Regression}
	\label{sec:framework}
	In this section, we propose a computational algorithm to implement the estimation method defined in \eqref{eq:huber_model}, noting that the constraints in the minimization problem may introduce some computational challenges.
	To fully exploit the underlying structures, we employ an efficient ADMM based on the symmetric
	Gauss--Seidel iteration, which is both computationally efficient and guarantees convergence.
	
	\subsection{Preliminaries on Convex Optimization}\label{subpre}
	
	In this subsection, we review several fundamental concepts and preliminary results that are essential for the developments in the subsequent sections. 
	Let $\mathbb{R}^p$ be a finite-dimensional Euclidean space endowed with an inner product $\langle \cdot,\cdot \rangle $ and its induced norm $\|\cdot\|_2$, respectively.
	The inner product induced by a self-adjoint positive semidefinite  operator $\mathcal{S}$ is defined as  $\left \langle x,y \right \rangle _{\mathcal{S}}=\left \langle \mathcal{S}x,y \right \rangle $, and the corresponding norm is $\|x\|_{\mathcal{S}}^{2} := \langle x, x \rangle_{\mathcal{S}} = x^{\top} \mathcal{S} x$.
	Let $f:\mathbb{R}^p\to(-\infty,+\infty]$ be a closed proper convex function.
	The Fenchel conjugate function of $f$, denoted as $f^{*}$, is 
	\[
		f^{*}(y)
		:=\sup_{x\in\mathbb{R}^p}
		\big\{\langle y,x\rangle-f(x)\big\}
		=-\inf_{x\in\mathbb{R}^p}
		\big\{f(x)-\langle y,x\rangle\big\}.
	\]
	For a nonempty closed convex set $\mathcal{C}$, the symbol $\delta_{\mathcal{C}}(x)$ denotes 
	the indicator function over $\mathcal{C}$, that is, $\delta_{\mathcal{C}}(x)=0$ if $x \in \mathcal{C}$, and $+\infty$ otherwise. 
	For any $\ell_p$-norm function, it is not difficult to deduce that its Fenchel conjugate satisfies
	$\|x\|_{p}^{*}=\delta_{\mathbb{B}_{q}^{(1)}}(x)$ with $\mathbb{B}_{q}^{(1)}:=\{\, x \mid \|x\|_{q} \le 1 \,\}$,
	where $1/p + 1/q = 1$. 
	The subdifferential of $f$ at $x \in \mathrm{dom}(f)$ is defined by
	\[
		\partial f(x)
		:=\bigl\{\,g\in\mathbb{R}^p\ \big|\ 
		f(z)\ge f(x)+\langle g,z-x\rangle,
		\forall z\in\mathbb{R}^p
		\bigr\}.
	\]
	Any vector $g\in\partial f(x)$ is called a \emph{subgradient} of $f$ at $x$ \citep{Rockafellar1970ConvexAnalysis}. 
	When $f$ is differentiable at $x$, the subdifferential reduces to a singleton and $\partial f(x)=\{\nabla f(x)\}$.
	For the univariate Huber loss function defined in ~\eqref{eq:huber_def}, it is easy to deduce that its Fenchel conjugate function is $h_\tau^*(x)=\frac12x^2+\delta_{\mathbb{B}^{(\tau)}}(x)$, and its subdifferential has the following piecewise form:
	\[
	\partial h_{\tau}(x)=
	\begin{cases}
		\{x\}, & |x|< \tau,\\[4pt]
		\{\tau\,\mathrm{sign}(x)\}, & |x|\geq \tau,
	\end{cases}
	\]
	where $\mathrm{sign}(x)=1$ if $x>0$, $\mathrm{sign}(x)=-1$ if $x<0$, and $\mathrm{sign}(0)=0$ otherwise.
	
	The Moreau-Yosida regularization \citep{Moreau1962ProximalPoint} of a closed proper convex function $f$ at $x \in \mathbb{R}^p$ is defined as 
	\begin{equation}\label{eq:proximal mapping}
		\varphi _{f}(x) = \min\limits_{y \in \mathbb{R}^p} \left\{ f(y) + \frac{1}{2} \| y - x \|^{2} \right\}.
	\end{equation}
	For any $x \in \mathbb{R}^p$, problem~\eqref{eq:proximal mapping} has a unique optimal solution, which is called the proximal point of $x$ associated with $f$, or the proximal mapping operator, i.e.,
	\[
	\operatorname{prox}_{f}(x) = \arg\min\limits_{y \in \mathbb{R}^p} \left\{ f(y) + \frac{1}{2} \| y - x \|^{2} \right\}.
	\]
	In particular, the proximal point of $x$ associated with an indicator function 
	$\delta_{C}(x)$ reduces to the metric projection of $x$ onto a convex closed set $C$, i.e.,
	\begin{equation}\label{eq:prox_indicator}
		\begin{aligned}
			\operatorname{prox}_{\delta_{C}}(x)
			&= \arg\min_{y \in \mathbb{R}^p}
			\left\{ \delta_{C}(y) + \frac{1}{2}\|y - x\|^{2} \right\} \\
			&= \arg\min_{y \in C}
			\left\{ \frac{1}{2}\|y - x\|^{2} \right\} \\
			&= \Pi_{C}(x).
		\end{aligned}
	\end{equation}
	The relationship between the proximal point associated with a function and its conjugate is described as Moreau\textquotesingle s identity [\cite{Rockafellar1970ConvexAnalysis}, Theorem 35.1], i.e.,
	\begin{equation}\label{eq:moluodengshi}
		\operatorname{prox}_{tf}(x)+t\operatorname{prox}_{f^{*}/t}(x/t)=x,
	\end{equation}
	where $t>0$ is a positive scalar.
	The proximal mapping for some norm functions can be computed flexibly utilizing the Moreau\textquotesingle s  identity, which is essential in the subsequent analysis.
	As an example, consider the function $f(x)=\mu \|x\|_{1}$ with $\mu > 0$. Its Fenchel conjugate is given by $f^{*}(x^{*})=\delta _{\mathbb{B}_{\infty }^{(\mu)}}(x^{*})$ with 
	${\mathbb{B}_{\infty }^{(\mu)}}(x^{*}):=\big \{ {x^{*} \in \mathbb{R}^{p} \mid ||x^{*}||_{\infty }}\le \mu  \big \} $. Moreover, the proximal mapping operator of $f$ can be expressed as
	\begin{equation}
		\begin{aligned}
			\operatorname{prox}_{f}(x^{*})
			&=x^{*}-\Pi_{\mathbb{B}_{\infty }^{(\mu)}}(x^{*}),\\
			\big(\Pi_{\mathbb{B}_{\infty }^{(\mu)}}(x^{*})\big)_{i}
			&=
			\begin{cases}
				x_{i}^{*}, & |x_{i}^{*}|\le \mu,\\[4pt]
				\mathrm{sign}(x_{i}^{*})\mu, & |x_{i}^{*}|>\mu.
			\end{cases}
		\end{aligned}
	\end{equation}
	\subsection{sGS-ADMM Algorithm Based on Dual Formulation}
	This section focuses on deriving the dual formulation of the model in \eqref{eq:huber_model} and subsequently designing the sGS-ADMM algorithm.
	For convenience, we define the multivariate Huber function as
	\begin{equation}\label{eq:multi_huber}
		\mathcal{H}_{\tau}(x) = \frac{1}{n} \sum_{i=1}^{n} h_{\tau}(x_{i}),
	\end{equation}
	where $h_{\tau}(\cdot)$ is the univariate Huber loss function given by \eqref{eq:huber_def}.
	It follows from the convexity of $h_{\tau}(\cdot)$ that the multivariate Huber function $\mathcal{H}_{\tau}(\cdot)$ is also convex.
	Using this notation and defining $r := y - Z\beta$, then we rewrite  \eqref{eq:huber_model} equivalently as follows:
	\begin{equation}\label{eq:constrained_estimation}
		\begin{aligned}
			\min_{r \in \mathbb{R}^{n}, \, \beta \in \mathbb{R}^{p}} 
			& \quad \mathcal{H}_{\tau}(r) + p(\boldsymbol{\beta}),  \\
			\text{s.t.} & \quad 
			r = y - Z\beta, \quad C^{\top}\beta = 0,
		\end{aligned}
	\end{equation}
	where $p(\beta)=\lambda\|\beta\|_1$.
	
	The Lagrangian function associated with ~\eqref{eq:constrained_estimation} is
	\[
		\mathcal{L}(\beta, r; u, w)
		={} \mathcal{H}_\tau(r) + p(\beta) \\
		-\left\langle u,r-y+Z\beta\right\rangle
		-\left\langle w,C^{\top}\beta\right\rangle .
	\]
	where $u \in \mathbb{R}^n$ is the multiplier for $r = y - Z{\beta}$,  
	and $w \in \mathbb{R}^m$ be the multiplier for $C^{\top}{\beta}=0$.
	The Lagrange dual function of ~\eqref{eq:constrained_estimation} is to minimize $\mathcal{L}({\beta}, r; u, w)$ with respect
	to $r$ and $\beta$, that is
	\[
	\begin{aligned}
		D(u,w)
		&:=\max_{u,w}\min_{r,\beta}
		\mathcal{L}(\beta,r;u,w)\\
		&=\max_{u,w}\Big\{
		-\mathcal{H}_{\tau}^{*}(u)
		-p^{*}\big(Z^{\top}u+Cw\big)
		+\left\langle u,y\right\rangle
		\Big\}.
	\end{aligned}
	\]
	where $\mathcal{H}_{\tau}^{*}(\cdot)$ and $p^{*}(\cdot)$ represent the Fenchel conjugate functions 
	of $\mathcal{H}_{\tau}(\cdot)$ and $p(\cdot)$, respectively.
	The Lagrangian dual problem of  \eqref{eq:constrained_estimation} is to maximize the dual function $D(u,w)$, which can equivalently be written as the following minimization problem:
	\begin{equation}\label{eq:dual_min_problem}
		\min_{u \in \mathbb{R}^{n},\,w \in \mathbb{R}^{m}}
		\left\{\mathcal{H}_{\tau}^{*}(u)+p^{*}\big(Z^{\top}u+Cw\big)
		-\langle u,y\rangle\right\}.
	\end{equation}
	Using the preliminary results in Subsection \ref{subpre}, it is straightforward to calculate the Fenchel conjugate function of $\mathcal{H}_{\tau}^{*}(\cdot)$, which is given by 
	$\mathcal{H}_{\tau}^{*}(u) = \frac{n}{2}\|u\|^{2} + \delta_{\mathbb{B}_{\infty}^{(\tau/n)}}(u)$.
	Let $v:=Z^{\top}u+Cw$, then ~\eqref{eq:dual_min_problem} reformulates as the following separable form:
	\begin{equation}\label{eq:dual_box}
		\begin{aligned}
			\min_{\substack{u\in\mathbb{R}^{n},\,v\in\mathbb{R}^{p},
					w\in\mathbb{R}^{m}}}\quad
			&\mathcal{H}_{\tau}^{*}(u)+p^{*}(v)-\langle u,y\rangle,\\
			\text{s.t.}\quad &v-Z^{\top}u-Cw=0.
		\end{aligned}
	\end{equation}
	Obviously, we see that this minimization involves three separable variables with a linear constraint, which makes it suitable for solving using ADMM.
	
	For our purpose of using ADMM, we write the augmented Lagrangian function associated with dual problem ~\eqref{eq:dual_box} as follows:
	\begin{equation}\label{eq:augmented}
		\mathcal{L}_{\sigma}(u,v,w;\beta)
		=\mathcal{H}_\tau^{*}(u)+p^{*}(v)-\langle u,y\rangle
		-\langle\beta,\,v-Z^{\top}u-Cw\rangle
		+\frac{\sigma}{2}\|v-Z^{\top}u-Cw\|^2.
	\end{equation}
	where $\beta \in \mathbb{R}^{p}$ is multiplier associated with the constraint. Clearly, problem~\eqref{eq:dual_box} consists of three separable variable blocks
	$u$, $v$, and $w$ linked by the linear constraint $v-Z^{\top}u-Cw=0$.
	Since the classical ADMM is established for convex optimization problems with a two-block separable structure, it is not directly applicable to problem \eqref{eq:dual_box}, see \cite{ChenHeYeYuan2016}.
	Therefore, additional techniques are required to handle the three-block coupling structure. To preserve the intrinsic structure of the problem while maintaining
	computational tractability, we group the variables into two blocks, namely $(w,v)$ and $u$, and apply the symmetric Gauss--Seidel decomposition theorem of
	\cite{Li2018BlockSymmetricGS} to the $(w,v)$-block.
	Consequently, the augmented Lagrangian minimization problem can be solved by alternately updating the two blocks, while performing a symmetric Gauss--Seidel sweep within the block $(w,v)$.
	More precisely, starting from an initial point $(u^{0}, v^{0}, w^{0}; \beta^{0})$, the algorithm generates a sequence $\{(u^{k}, v^{k}, w^{k}; \beta^{k})\}$
	according to the following iterative scheme:
	\begin{equation}\label{eq:diedai}
		\left\{
		\begin{aligned}
			w^{k+\frac12}
			&= \arg\min_{w \in \mathbb{R}^{m}}
			\Big\{\mathcal{L}_{\sigma}(u^{k},v^{k},w;\beta^{k})
			+\tfrac12\|w-w^{k}\|_{\mathcal{Q}}^{2}\Big\},\\[1mm]
			v^{k+1}
			&= \arg\min_{v \in \mathbb{R}^{p}}
			\Big\{\mathcal{L}_{\sigma}(u^{k},v,w^{k+\frac12};\beta^{k})
			+\tfrac12\|v-v^{k}\|_{\mathcal{T}}^{2}\Big\},\\[1mm]
			w^{k+1}
			&= \arg\min_{w \in \mathbb{R}^{m}}
			\Big\{\mathcal{L}_{\sigma}(u^{k},v^{k+1},w;\beta^{k})
			+\tfrac12\|w-w^{k}\|_{\mathcal{Q}}^{2}\Big\},\\[1mm]
			u^{k+1}
			&= \arg\min_{u \in \mathbb{R}^{n}}
			\Big\{\mathcal{L}_{\sigma}(u,v^{k+1},w^{k+1};\beta^{k})
			+\tfrac12\|u-u^{k}\|_{\mathcal{S}}^{2}\Big\},\\[1mm]
			\beta^{k+1}
			&= \beta^{k}-\sigma\rho
			\bigl(v^{k+1}-Z^{\top}u^{k+1}-Cw^{k+1}\bigr).
		\end{aligned}
		\right.
	\end{equation}
	where $\rho\in(0,(1+\sqrt5)/2)$ is the dual step-length parameter,
	and $\mathcal S$, $\mathcal T$, and $\mathcal Q$ are self-adjoint
	positive semidefinite linear operators.
	Note that the variables $(w,v)$ are updated according to the symmetric
	Gauss--Seidel order
	\[
	w^{k+\frac12}\rightarrow v^{k+1}\rightarrow w^{k+1},
	\]
	whereas $u$ is updated subsequently as the second block.
	Consequently, algorithm \eqref{eq:diedai} can be interpreted as a
	two-block semi-proximal ADMM with an sGS decomposition applied to the
	$(w,v)$-block.

	We now discuss how each subproblem in
	\eqref{eq:diedai} can be solved efficiently.
	In particular, the special structure of the dual model enables
	closed-form solutions for the $(w,v)$-updates, whereas the
	$u$-subproblem can be computed via a simple projection operation.

First, let $\mathcal{Q}:=\gamma I$ with $\gamma > 0$.
Then the subproblem for computing $w^{k+\frac12}$ is given by
\begin{equation}\label{eq:w_update}
	\begin{aligned}
		w^{k+\frac12}
		&= \arg\min_{w\in\mathbb{R}^{m}}
		\Big\{\mathcal{L}_{\sigma}(u^{k},v^{k},w;\beta^{k})
		+\tfrac12\|w-w^{k}\|_{\mathcal Q}^{2}\Big\}\\
		&= \arg\min_{w\in\mathbb{R}^{m}}
		\Big\{\langle\beta^{k},Cw\rangle
		+\tfrac{\sigma}{2}\|v^{k}-Z^{\top}u^{k}-Cw\|^{2}
		+\tfrac{\gamma}{2}\|w-w^{k}\|^{2}\Big\}\\
		&= (\sigma C^{\top}C+\gamma I)^{-1}
		\Big[\sigma C^{\top}(v^{k}-Z^{\top}u^{k})
		+C^{\top}\beta^{k}+\gamma w^{k}\Big].
	\end{aligned}
\end{equation}
where $\gamma > 0$ is chosen such that $\sigma C^{\top}C+\gamma I$ is invertible.

Next, by setting $\mathcal{T}:=0$, the $v$-subproblem admits a particularly
simple form:
\begin{equation}\label{eq:v_update}
	\begin{aligned}
		v^{k+1}
		&:= \arg\min_{v\in\mathbb{R}^{p}}
		\Big\{\mathcal{L}_{\sigma}(u^{k},v,w^{k+\frac12};\beta^{k})
		+\tfrac12\|v-v^{k}\|_{\mathcal T}^{2}\Big\}\\
		&= \arg\min_{v\in\mathbb{R}^{p}}
		\Big\{p^{*}(v)-\langle\beta^{k},v\rangle
		+\tfrac{\sigma}{2}\|v-Z^{\top}u^{k}-Cw^{k+\frac12}\|^{2}\Big\}\\
		&= \operatorname{prox}_{p^{*}}^{1/\sigma}
		\left(Z^{\top}u^{k}+\frac{\beta^{k}}{\sigma}+Cw^{k+\frac12}\right)\\
		&=
		\left(Z^{\top}u^{k}+\frac{\beta^{k}}{\sigma}+Cw^{k+\frac12}\right)
		-\sigma^{-1}\operatorname{Prox}_{\sigma p}
		\left(\sigma\bigl(Z^{\top}u^{k}+Cw^{k+\frac12}\bigr)+\beta^{k}\right).
	\end{aligned}
\end{equation}
where the last equality is from the Moreau\textquotesingle s identity reviewed in \eqref{eq:moluodengshi}.

Finally, by choosing
$
\mathcal{S}:=(\eta-n)I_n-\sigma ZZ^\top,
$
where $\eta>0$ is selected such that $\mathcal{S}$ is positive semidefinite,
the variable $u$ can be updated at iteration $k$ by solving
\begin{equation}\label{eq:u_update}
	\begin{aligned}
		u^{k+1}
		&=\arg\min_{u\in\mathbb{R}^{n}}\Big\{
		\mathcal{L}_{\sigma}(u,v^{k+1},w^{k+1};\beta^{k})
		+\tfrac{1}{2}\|u-u^{k}\|_{\mathcal S}^{2}\Big\}\\
		&=\arg\min_{u\in\mathbb{R}^{n}}\Big\{
		\tfrac{n}{2}\|u\|^{2}
		+\delta_{\mathbb{B}_{\infty}^{(\tau/n)}}(u)
		-\langle u,y\rangle
		+\langle\beta^{k},Z^{\top}u\rangle\\[-1mm]
		&\qquad
		+\tfrac{\sigma}{2}
		\|v^{k+1}-Z^{\top}u-Cw^{k+1}\|^{2}
		+\tfrac{1}{2}\|u-u^{k}\|_{\mathcal S}^{2}\Big\}\\
		&=\Pi_{\mathbb{B}_{\infty}^{(\tau/n)}}
		\Big(\eta^{-1}\big[
		y-Z\beta^{k}
		+\sigma Zv^{k+1}
		+\mathcal Su^{k}
		-\sigma ZCw^{k+1}
		\big]\Big).
	\end{aligned}
\end{equation}
where the last equality is from \eqref{eq:prox_indicator}.
	
	We are now ready to present the H-RegAd (Huber Regression with sGS-ADMM) algorithm for robust constrained log-contrast regression on high-dimensional compositional microbiome data.
\begin{algorithm}[H]
	\caption{H-RegAd}
	\label{alg:sGS-ADMM}
	\begin{algorithmic}[1]
		
		\STATE \textbf{Initialization:}
		Choose $(u^{0},v^{0},w^{0};\beta^{0})$.
		Choose $\eta>0$ such that
		$\mathcal{S}:=(\eta-n)I-\sigma ZZ^{\top}\succeq0$,
		$\mathcal{Q}:=\gamma I$, $\mathcal{T}:=0$, $\sigma>0$, and
		$\rho\in(0,(1+\sqrt{5})/2)$.
		
		\FOR{$k=0,1,2,\ldots$}
		
		\STATE Compute $w^{k+\frac12}$ according to \eqref{eq:w_update}.
		
		\STATE Compute $v^{k+1}$ according to \eqref{eq:v_update}.
		
		\STATE Compute $w^{k+1}$ analogously to \eqref{eq:w_update},
		with $v^{k}$ replaced by $v^{k+1}$.
		
		\STATE Compute $u^{k+1}$ according to \eqref{eq:u_update}.
		
		\STATE Compute
		\[
		\beta^{k+1}
		=
		\beta^{k}
		-\sigma\rho
		\big(
		v^{k+1}
		-Z^{\top}u^{k+1}
		-Cw^{k+1}
		\big).
		\]
		
		\ENDFOR
		
		\STATE \textbf{Stop when a stopping criterion is satisfied.}
		
	\end{algorithmic}
\end{algorithm}
	We note that the stopping criterion of the proposed algorithm is based on the residuals of the KKT conditions.
	For clarity of presentation, the detailed definition of the stopping rule is deferred to the following numerical experiments section.
	Furthermore, the update sequence $w \rightarrow v \rightarrow w$  is equivalent to solving a single $(w,v)$-block subproblem under the symmetric
	Gauss--Seidel decomposition theorem of \cite{Li2018BlockSymmetricGS}
	with appropriate proximal term.
	Therefore, the proposed algorithm falls into the framework of two-block semi-proximal ADMM,
	and its global convergence follows directly from the general convergence theory of semi-proximal ADMM of  \cite{fazel2013hankel}.
	To conclude this subsection, we state the convergence result of Algorithm~\ref{alg:sGS-ADMM}. Its proof follows directly from \cite[Theorem B.1]{fazel2013hankel} and is therefore omitted.
	\begin{theorem}
		Let $\{(w^k,v^k,u^k,\beta^k)\}$	be the sequence generated by Algorithm \ref{alg:sGS-ADMM}.	
		Then the sequence $\{(w^k,v^k,u^k,\beta^k)\}$ converges to a solution $(\hat w,\hat v,\hat u,\hat \beta)$ of the Karush-Kuhn-Tucker (KKT) system associated with problem
		\eqref{eq:dual_box}. Consequently, $(\hat w,\hat v,\hat u)$  is an optimal solution of the dual problem
		\eqref{eq:dual_box}, and $\hat \beta$ 
		is an optimal solution of the corresponding primal problem \eqref{eq:huber_model}.
	\end{theorem}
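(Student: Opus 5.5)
The plan is to recast Algorithm~\ref{alg:sGS-ADMM} as a two-block semi-proximal ADMM applied to \eqref{eq:dual_box}, and then verify the hypotheses of \cite[Theorem B.1]{fazel2013hankel}. Write the dual problem as $\min\{f(x)+g(u): \mathcal{A}^*x+\mathcal{B}^*u=0\}$ with $x=(w,v)$. Here $f(w,v)=p^*(v)$ and $g(u)=\mathcal{H}_\tau^*(u)-\langle u,y\rangle$. The linear maps are $\mathcal{A}^*x=v-Cw$ and $\mathcal{B}^*u=-Z^\top u$. The first step is to invoke the sGS decomposition theorem of \cite{Li2018BlockSymmetricGS}. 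It shows that the sweep $w^{k+\frac12}\to v^{k+1}\to w^{k+1}$ with $\mathcal{Q}=\gamma I$ and $\mathcal{T}=0$ is exactly one minimization of $\mathcal{L}_\sigma(u^k,\cdot;\beta^k)+\frac12\|x-x^k\|^2_{\widehat{\mathcal{T}}}$ over $x=(w,v)$. The induced proximal operator is
\[
\widehat{\mathcal{T}}=\mathrm{Diag}(\gamma I,0)+\mathcal{N}\mathcal{D}^{-1}\mathcal{N}^*,
\]
where $\mathcal{D}=\sigma C^\top C+\gamma I$ and $\mathcal{N}$ is the strictly upper off-diagonal part of $\sigma\mathcal{A}\mathcal{A}^*$, namely $-\sigma C$. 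This operator is self-adjoint and positive semidefinite. For this step one needs the $v$-block to be handled exactly, which holds because $p^*$ is an indicator and its prox is the projection in \eqref{eq:v_update}.

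Next I check the hypotheses of the semi-proximal ADMM theorem. First, $f$ and $g$ are closed proper convex. Indeed, $g$ is $n$-strongly convex with domain $\mathbb{B}_\infty^{(\tau/n)}$, and $p^*=\delta_{\mathbb{B}_\infty^{(\lambda)}}$. Second, the step length satisfies $\rho\in(0,(1+\sqrt5)/2)$. Third, the proximal operators are such that the subproblems have unique solutions; the key requirement is that $\Sigma_f+\widehat{\mathcal{T}}+\sigma\mathcal{A}\mathcal{A}^*\succ0$ and $\Sigma_g+\mathcal{S}+\sigma\mathcal{B}\mathcal{B}^*\succ0$. For the $u$-block, $\mathcal{S}+\sigma ZZ^\top=(\eta-n)I$, and the strong convexity modulus $\Sigma_g=nI$ gives a total of $\eta I\succ0$. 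For the $x$-block, I compute $\widehat{\mathcal{T}}+\sigma\mathcal{A}\mathcal{A}^*$ and show it is positive definite. The $v$-component gets $\sigma I$, and the $w$-component gets $\gamma I+\sigma C^\top C$, together with the sGS correction. This reduces to the standard sGS identity that this operator equals $(\mathcal{D}_{\rm lower})\mathcal{D}^{-1}(\mathcal{D}_{\rm lower})^*\succ0$. The identity holds because the diagonal blocks $\sigma C^\top C+\gamma I$ and $\sigma I$ are positive definite.

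Finally, existence of a KKT point must be established, since Theorem B.1 presupposes that the KKT set is nonempty. I would argue this from the primal side. The objective in \eqref{eq:huber_model} is coercive on the subspace $\{C^\top\beta=0\}$ because of the $\lambda\|\beta\|_1$ term, so a minimizer exists. The constraints are linear and all functions are finite-valued polyhedral or smooth, so a Slater-type condition holds trivially. Strong duality and dual attainment then follow, yielding a KKT point $(\hat w,\hat v,\hat u,\hat\beta)$. Theorem B.1 then gives convergence of $\{(w^k,v^k,u^k,\beta^k)\}$ to such a point. The KKT conditions $\hat v=Z^\top\hat u+C\hat w$, $\hat\beta\in\partial p^*(\hat v)$, $C^\top\hat\beta=0$ (from the $w$-optimality), and $y-Z\hat\beta\in\partial\mathcal{H}_\tau^*(\hat u)$ are exactly the optimality conditions of \eqref{eq:huber_model} with multipliers. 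This yields primal optimality of $\hat\beta$ and dual optimality of $(\hat w,\hat v,\hat u)$.

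The main obstacle I expect is the careful matching with \cite{Li2018BlockSymmetricGS}. One must confirm that the $w$-subproblems, which carry the extra proximal term $\gamma I$, fit the sGS theorem. One must also verify the positive definiteness condition when $\mathcal{T}=0$, so that the nonsmooth $v$-block receives no proximal term. If the direct identification is awkward, I would instead treat $\widetilde f(w,v)=p^*(v)+\frac{\gamma}{2}\|w-w^k\|^2$ and apply the sGS theorem with a quadratic-in-$w$ block, which is the original setting of that theorem.
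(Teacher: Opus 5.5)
Your proposal is correct and follows the same route as the paper: you identify the $w\to v\to w$ sweep with a single semi-proximal $(w,v)$-block update via the sGS decomposition theorem of \cite{Li2018BlockSymmetricGS}, and then invoke \cite[Theorem B.1]{fazel2013hankel}. Your explicit checks supply details the paper omits: the induced term $\sigma^2 C(\sigma C^\top C+\gamma I)^{-1}C^\top$ on the $v$-block, positive definiteness via the standard sGS factorization, the identity $nI+\mathcal{S}+\sigma ZZ^\top=\eta I$, and existence of a KKT point. The only nit is that the sGS ordering must put the nonsmooth block first, i.e.\ $(v,w)$, for $-\sigma C$ to be the strictly upper part.
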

	
	\subsection{Tuning Parameter Selection via Cross-Validation}
	To implement Algorithm H-RegAd, we use $K$-fold cross-validation to select the tuning parameter $\lambda$ and the Huber threshold $\tau$. Specifically, the data are randomly divided into $K$ approximately equal-sized folds. 
	For each candidate pair $(\lambda,\tau)$, the proposed robust constrained log-contrast regression model employing Huber loss \eqref{eq:huber_model} is fitted on the training folds, and its prediction performance is evaluated on the corresponding validation fold.
	Moreover, to reduce scale effects, the covariates are standardized within each training fold to have zero mean and unit variance, and the same standardization is subsequently applied to the corresponding validation fold.
	
	In the following experiments, we consider a decreasing sequence of candidate values for the regularization parameter $\lambda$, ranging from $\lambda_{\max}=\|Z^{\top}y\|_{\infty}/n$ to $10^{-3}\lambda_{\max}$, together with a small candidate set for the Huber threshold $\tau$.
	For the $k$-th fold, let $\widehat{\beta}^{(\setminus k)}(\lambda,\tau)$ denote the estimator obtained from the data excluding fold $k$. We then define the cross-validation criterion by
	\[
	\mathrm{CV}(\lambda,\tau)
	=\frac{1}{K}\sum_{k=1}^{K}\frac{1}{|I_k|}
	\sum_{i\in I_k}h_{\tau}\!\left(
	y_i-z_i^{\top}\widehat{\beta}^{(\setminus k)}(\lambda,\tau)\right).
	\]
	where $I_k$ is the index set of observations in the $k$-th validation fold.
	Then, the selected tuning parameters are given by
	$(\widehat{\lambda},\widehat{\tau})=\arg\min_{\lambda,\tau}\mathrm{CV}(\lambda,\tau)$.
	After selecting the optimal pair $(\widehat{\lambda},\widehat{\tau})$,
	the H-RegAd method is refitted using the full dataset. An adaptive thresholding rule is then applied to the estimator
	$\widehat{\beta}$ obtained from \eqref{eq:huber_model} to identify the
	active set. Subsequently, an unpenalized Huber refitting step is carried
	out on the selected variables to alleviate the shrinkage bias induced by
	the Lasso penalty in \eqref{eq:huber_model}. 
	Finally, among the variables retained after thresholding, a BIC-type
	criterion \citep{Wang2007Tuning} may be employed to select the final sparse model.
	
\section{Non-asymptotic Error Analysis}\label{sec:analysis}
In this section, we establish non-asymptotic guarantees for the proposed estimator \eqref{eq:huber_model}. 
Building upon several auxiliary lemmas and some regularity conditions, we will derive finite-sample error bounds comparable to those of existing Huber loss estimators.
To introduce sparsity-related notation, let $\beta^*\in\mathbb{R}^{p}$ denote the true coefficient vector and assume that it is $s$-sparse, with support set $S:=\mathrm{supp}(\beta^*)$ and cardinality $s:=|S|$. Throughout this paper, we consider the high-dimensional setting where the true coefficient $\beta^*$ is sparse and
$
s<n<p.
$

Let $\hat{\beta}$ be the coefficient estimator obtained from model~\eqref{eq:huber_model}, and define the estimation error vector by
$
h := \hat{\beta} - \beta^*.
$
Let $S_h$ denote the index set corresponding to the $s$ largest (in magnitude) components of $h$, and let $S_h^c$ be its complement. Finally, for any vector
$u = (u_1, u_2, \dots, u_n)^\top \in \mathbb{R}^n$,
let $\psi_{\tau}(u_i) \in \partial h_{\tau}(u_i)$ denote a subgradient of the univariate Huber loss function $h_{\tau}(u_i)$ and define
$
\Psi_{\tau}(u) := \big( \psi_{\tau}(u_1), \psi_{\tau}(u_2), \dots, \psi_{\tau}(u_n) \big)^\top \in\mathbb{R}^n.
$

We assume without loss of generality that the columns of the constraint matrix $C\in\mathbb{R}^{p\times m}$ are orthonormal, i.e.,
$
C^\top C=I_m.
$
Define
\[
P_C:=CC^\top,
\qquad
\widetilde Z:=Z(I_p-P_C).
\]
Since $C^\top\hat\beta=C^\top\beta^*=0$, it follows that
$
P_C\hat\beta=P_C\beta^*=0.
$
The projected matrix $\widetilde Z$ is introduced only for the theoretical analysis and does not change model~\eqref{eq:huber_model} or the numerical algorithm.

To characterize the geometry of the design matrix underlying our analysis, we introduce two standard quantities from high-dimensional statistics: the restricted orthogonality constant (ROC) and the restricted isometry property (RIP) constants. These quantities measure, respectively, the near-orthogonality between sparse directions and the stability of $A$ on sparse vectors.

\begin{definition}\label{ROC}
	Let $A \in \mathbb{R}^{n\times p}$. The restricted
	orthogonality constant (ROC) of orders $k_1$ and $k_2$
	is defined as
	\[
	\theta_{k_1,k_2}
	\left(\frac{A}{\sqrt{n}}\right)
	=
	\sup_{\substack{
			\alpha_1,\alpha_2\in\mathbb{R}^{p}\\
			\|\alpha_1\|_0\le k_1,\;
			\|\alpha_2\|_0\le k_2\\
			\operatorname{supp}(\alpha_1)
			\cap
			\operatorname{supp}(\alpha_2)=\varnothing
	}}
	\frac{
		|\langle A\alpha_1,A\alpha_2\rangle|
	}{
		n\|\alpha_1\|_2\|\alpha_2\|_2
	}.
	\]
\end{definition}

\begin{definition}\label{RIP}
	Let $A \in \mathbb{R}^{n\times p}$. The restricted isometry property (RIP) constants of order $k$ associated with $A/\sqrt{n}$ are defined as
	\[
	\begin{aligned}
		\delta_k^{+}\left(\frac{A}{\sqrt{n}}\right)
		&=\sup_{\alpha\in\mathbb{R}^{p}}
		\left\{
		\frac{\|A\alpha\|_2^{2}}{n\|\alpha\|_2^{2}}
		\ \middle|\ 
		\alpha\text{ is }k\text{-sparse}
		\right\},
	\end{aligned}
	\]
	and
	\[
	\begin{aligned}
		\delta_k^{-}\left(\frac{A}{\sqrt{n}}\right)
		&=\inf_{\alpha\in\mathbb{R}^{p}}
		\left\{
		\frac{\|A\alpha\|_2^{2}}{n\|\alpha\|_2^{2}}
		\ \middle|\ 
		\alpha\text{ is }k\text{-sparse}
		\right\}.
	\end{aligned}
	\]
	where $\delta_k^{+}\big(A/\sqrt{n}\big)$ and $\delta_k^{-}\big(A/\sqrt{n}\big)$ are the upper and lower RIP constants.
\end{definition}

\noindent Equivalently, the RIP condition means that for any $k$-sparse vector $\alpha \in \mathbb{R}^p$, we have
\[
\delta_k^{-}\left(\frac{A}{\sqrt{n}}\right)
\le
\frac{\|A\alpha\|_2^{2}}{n\|\alpha\|_2^{2}}
\le
\delta_k^{+}\left(\frac{A}{\sqrt{n}}\right).
\]

We note that these quantities will be used to control sparse cross-terms in the subsequent analysis, which are essential for deriving the non-asymptotic error bound below. We now present two technical lemmas as building blocks of the analysis.

\begin{lem}\label{convex}
	Let $h_\tau$ be the univariate Huber loss function and let
	$\psi_\tau(u_i)\in \partial h_\tau(u_i)$ denote its subgradient.
	Then for any $u_i,u_j\in\mathbb{R}$, it holds that
	\[
	h_{\tau}(u_i-u_j)
	\ge
	h_{\tau}(u_i)-\psi_{\tau}(u_i)u_j.
	\]
\end{lem}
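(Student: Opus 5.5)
This is exactly the subgradient inequality for a convex function. By definition of the subdifferential recalled in Subsection~\ref{subpre}, for any $g\in\partial h_\tau(a)$ and any $b\in\mathbb{R}$ we have $h_\tau(b)\ge h_\tau(a)+g\,(b-a)$. I would apply this with $a=u_i$, $b=u_i-u_j$ and $g=\psi_\tau(u_i)$. This gives
\[
h_\tau(u_i-u_j)\ge h_\tau(u_i)+\psi_\tau(u_i)\bigl((u_i-u_j)-u_i\bigr)=h_\tau(u_i)-\psi_\tau(u_i)u_j ,
\]
which is the claim.

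The only ingredient to verify is that $h_\tau$ is convex, so that $\partial h_\tau(u_i)$ is nonempty and its elements satisfy the global subgradient inequality. I would check convexity directly from \eqref{eq:huber_def}. The function $h_\tau$ is continuously differentiable, with derivative $h_\tau'(u)=u$ for $|u|<\tau$ and $h_\tau'(u)=\tau\,\mathrm{sign}(u)$ for $|u|\ge\tau$. The two pieces agree at $|u|=\tau$ in both value and slope. This derivative equals $\max(-\tau,\min(u,\tau))$, which is nondecreasing, so $h_\tau$ is convex.

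Consequently $\partial h_\tau(u_i)=\{h_\tau'(u_i)\}$, which matches the piecewise formula for $\partial h_\tau$ given earlier in the paper, and $\psi_\tau(u_i)$ is uniquely determined. There is no real obstacle here. The one point requiring care is that the inequality must hold globally, for all $u_j$ and not only locally near $u_i$. That global validity is exactly what convexity, rather than mere differentiability, supplies.
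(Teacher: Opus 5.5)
Your proposal is correct and matches the paper, which simply remarks that this lemma follows directly from the subgradient inequality for convex functions. Your instantiation $a=u_i$, $b=u_i-u_j$, $g=\psi_\tau(u_i)$ is exactly that argument, and your check that $h_\tau'(u)=\max(-\tau,\min(u,\tau))$ is nondecreasing (so $h_\tau$ is convex and $\partial h_\tau(u_i)$ is a singleton) fills in the one detail the paper leaves implicit.
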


\begin{lem}\label{lem:W}
	Let $h=\hat{\beta}-\beta^*$ and $u=y-\widetilde Z\beta^*$. Then
	\[
	\Psi_\tau(u)-\Psi_\tau(u-\widetilde Zh)
	=
	\int_0^1 D\Psi_\tau(u-t\widetilde Zh)\,dt\cdot\widetilde Zh,
	\]
	where $\Psi_\tau(u)=(\psi_\tau(u_1),\dots,\psi_\tau(u_n))^\top$ and
	$D\Psi_\tau(\cdot)$ denotes the diagonal matrix of almost everywhere derivatives of $\psi_\tau(\cdot)$. Define
	\[
	W:=\int_0^1\mathrm{diag}\big(\psi_\tau'(u-t\widetilde Zh)\big)\,dt.
	\]
	Then
	\[
	\Psi_\tau(u)-\Psi_\tau(u-\widetilde Zh)=W\widetilde Zh,
	\]
	and consequently,
	\[
	\widetilde Z^\top W\widetilde Zh
	=
	\widetilde Z^\top\Psi_\tau(u)
	-
	\widetilde Z^\top\Psi_\tau(u-\widetilde Zh).
	\]
	Moreover, the diagonal entries of $W$ lie in $[0,1]$.
\end{lem}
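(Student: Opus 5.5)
The statement to prove is Lemma~\ref{lem:W}, which is a vector-valued fundamental theorem of calculus applied coordinatewise. The plan is to first fix the precise form of $\psi_\tau$ and then integrate along the segment joining $u-\widetilde Zh$ to $u$.

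\textbf{Step 1: the form of $\psi_\tau$.} By the subdifferential formula in Subsection~\ref{subpre}, $\psi_\tau(x)=x$ for $|x|<\tau$ and $\psi_\tau(x)=\tau\,\mathrm{sign}(x)$ for $|x|\ge\tau$. Equivalently, $\psi_\tau(x)=\max\{-\tau,\min\{x,\tau\}\}$, the clipping map. This function is single-valued and Lipschitz with constant $1$, hence absolutely continuous on every bounded interval. Its derivative exists everywhere except at $\pm\tau$ and equals
\[
\psi_\tau'(x)=\mathbf 1\{|x|<\tau\}\in\{0,1\}.
\]

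\textbf{Step 2: coordinatewise integration.} Write $a:=\widetilde Zh\in\mathbb R^n$ and, for each $i$, set $g_i(t):=\psi_\tau(u_i-ta_i)$ for $t\in[0,1]$. As a composition of a Lipschitz function with an affine map, $g_i$ is Lipschitz, hence absolutely continuous, so $g_i(1)-g_i(0)=\int_0^1 g_i'(t)\,dt$.

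If $a_i=0$, both sides of the claimed identity vanish in coordinate $i$. If $a_i\neq0$, the map $t\mapsto u_i-ta_i$ hits the two kinks $\pm\tau$ at most at two values of $t$. Off those points the chain rule gives $g_i'(t)=-\psi_\tau'(u_i-ta_i)\,a_i$. Therefore
\[
\psi_\tau(u_i)-\psi_\tau(u_i-a_i)=g_i(0)-g_i(1)=\Big(\int_0^1\psi_\tau'(u_i-ta_i)\,dt\Big)a_i.
\]
Stacking these $n$ identities gives $\Psi_\tau(u)-\Psi_\tau(u-\widetilde Zh)=\int_0^1 D\Psi_\tau(u-t\widetilde Zh)\,dt\cdot\widetilde Zh=W\widetilde Zh$.

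\textbf{Step 3: consequences.} Left-multiplying by $\widetilde Z^\top$ yields the displayed identity for $\widetilde Z^\top W\widetilde Zh$. Since $\psi_\tau'$ takes values in $\{0,1\}$ almost everywhere, each diagonal entry of $W$ is the integral over $[0,1]$ of a measurable $\{0,1\}$-valued function. It therefore lies in $[0,1]$; in fact it equals the Lebesgue measure of $\{t\in[0,1]: |u_i-ta_i|<\tau\}$.

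\textbf{Main obstacle.} The only delicate point is justifying the fundamental theorem of calculus for the nonsmooth function $\psi_\tau$. This is handled by Lipschitz continuity, which gives absolute continuity, together with the observation that the set of $t$ at which the derivative fails to exist is finite. One should also note that the ``almost everywhere derivative'' in the definition of $W$ is well defined along the path. When $a_i\ne0$ the exceptional set is finite in $t$. When $a_i=0$ the factor $a_i$ kills the term regardless of how $\psi_\tau'$ is assigned at the kink, and assigning the value $0$ there keeps the entry in $[0,1]$.
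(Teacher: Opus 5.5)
Your proof is correct and follows the route the paper indicates: the paper only remarks that the lemma follows from the monotonicity and $1$-Lipschitz continuity of $\psi_\tau$, and your argument is the fleshed-out version of that remark, namely absolute continuity of $t\mapsto\psi_\tau(u_i-ta_i)$, the coordinatewise fundamental theorem of calculus, and $\psi_\tau'=\mathbf{1}\{|x|<\tau\}\in\{0,1\}$ almost everywhere. The paper leaves the degenerate coordinates with $a_i=0$ and $|u_i|=\tau$ implicit (there the integrand is undefined along the whole path), and you handle them correctly.
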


We remark that Lemma~\ref{convex} follows directly from the subgradient inequality for convex functions \citep{Rockafellar1970ConvexAnalysis}.
Lemma~\ref{lem:W} follows from the monotonicity and the $1$-Lipschitz continuity of the Huber subgradient function $\psi_{\tau}(\cdot)$.
Both lemmas relate the objective difference to subgradient terms which serve as key technical ingredients for our subsequent error bounds analysis.

To derive finite-sample error bounds, we impose a set of regularity conditions that control the stochastic error and ensure the restricted identifiability of the model through suitable sparse design assumptions. Since problem~\eqref{eq:huber_model} is convex with linear constraints, the KKT conditions are necessary and sufficient. Hence, there exist a Lagrange multiplier $\mu\in\mathbb{R}^m$ and a subgradient vector $\xi\in\partial\|\hat\beta\|_1$ such that
\begin{equation}\label{eq:kkt_condition}
	0
	=
	-\frac{1}{n}Z^\top\Psi_\tau(y-Z\hat\beta)
	+\lambda\xi
	+C\mu,
	\qquad
	C^\top\hat\beta=0.
\end{equation}

\begin{assumption}\label{KKT Condition}
	Assume that there exists a constant $k_0\ge1$, independent of $p$,
	such that
	\[
	\|I_p-P_C\|_\infty\le k_0,
	\]
	where $\|\cdot\|_\infty$ denotes the induced matrix infinity norm.
\end{assumption}

For the standard zero-sum constraint
$C=p^{-1/2}\mathbf{1}_p$, we have
$P_C=p^{-1}\mathbf{1}_p\mathbf{1}_p^\top$, and hence
\[
\|I_p-P_C\|_\infty
=
2\left(1-\frac{1}{p}\right)<2.
\]
Therefore, Assumption~\ref{KKT Condition} holds with
$k_0=2$. The same bound applies to disjoint blockwise
zero-sum constraints after normalizing each block indicator.

Premultiplying the first KKT equation in \eqref{eq:kkt_condition} by $I_p-P_C$ and using $(I_p-P_C)C=0$, we obtain
\[
\frac1n\widetilde Z^\top
\Psi_\tau(y-\widetilde Z\hat\beta)
=
\lambda(I_p-P_C)\xi.
\]
Since $\|\xi\|_\infty\le1$, Assumption~\ref{KKT Condition} implies
\begin{equation}\label{eq:k0bound}
	\left\|
	\frac1n\widetilde Z^\top
	\Psi_\tau(y-\widetilde Z\hat\beta)
	\right\|_\infty
	\le
	k_0\lambda.
\end{equation}

\begin{assumption}\label{zaoshengxiang}
	Assume that there exists a constant $\kappa>1$ such that the noise term $\epsilon$ in \eqref{eq:constrained_reg} satisfies
	\begin{equation}\label{eq:noise_score}
		\left\|
		\frac1n\widetilde Z^\top\Psi_\tau(\epsilon)
		\right\|_\infty
		\le
		\frac{\lambda}{\kappa}.
	\end{equation}
\end{assumption}

For a fixed design, Assumption~\ref{zaoshengxiang} holds with high
probability for $\lambda$ of order
$\tau\sqrt{(\log p)/n}$ under independent errors satisfying
$\mathbb E\{\psi_\tau(\epsilon_i)\}=0$ and normalized columns of
$\widetilde Z$.

Let
\[
a_\kappa:=\frac{\kappa+1}{\kappa-1},
\qquad
\mathcal I_\tau:=\left\{i:|\epsilon_i|\le\tau/2\right\},
\]
and define
\[
\mathcal C(S)
:=
\left\{
\Delta\in\mathbb R^p:
C^\top\Delta=0,\quad
\|\Delta_{S^c}\|_1\le a_\kappa\|\Delta_S\|_1
\right\}.
\]

\begin{assumption}\label{RE}
	Assume that there exist constants $K_Z<\infty$, $\phi_I>0$,
	${\delta}^{-}_{2s}>0$, ${\delta}^{+}_{2s}<\infty$, and
	$\phi_0>0$ such that
	\[
	\max_{1\le i\le n}\|\widetilde z_i\|_2\le K_Z,
	\]
	\[
	\frac1n\sum_{i\in\mathcal I_\tau}
	(\widetilde z_i^\top\Delta)^2
	\ge
	\phi_I\|\Delta\|_2^2,
	\qquad
	\Delta\in\mathcal C(S),
	\]
	and, for every $2s$-sparse vector $\alpha$,
	\[
	\frac1n\sum_{i\in\mathcal I_\tau}
	(\widetilde z_i^\top\alpha)^2
	\ge
	{\delta}^{-}_{2s}\|\alpha\|_2^2,
	\qquad
	\frac1n\|\widetilde Z\alpha\|_2^2
	\le
	{\delta}^{+}_{2s}\|\alpha\|_2^2.
	\]
	Moreover,
	\[
	(3\kappa-1){\delta}^{-}_{2s}
	-
	(\kappa+1){\delta}^{+}_{2s}
	\ge
	4\kappa\phi_0,
	\]
	and
	\begin{equation}\label{eq:localization_condition}
		\frac{2(\kappa+1)\sqrt{s}\lambda}
		{\kappa\phi_I}
		<
		\frac{\tau}{2K_Z}.
	\end{equation}
\end{assumption}

Assumption~\ref{RE} is stated only in terms of the design, the true noise, and the model parameters, and therefore does not depend on the fitted quantities $\hat\beta$ or $W$. It requires that a sufficient fraction of observations remains in the quadratic region of the Huber loss and that the corresponding design retains adequate sparse restricted curvature.

\begin{lem}\label{lem:localization}
	Under Assumptions~\ref{zaoshengxiang} and \ref{RE},
	\[
	\|h\|_2<\frac{\tau}{2K_Z}.
	\]
\end{lem}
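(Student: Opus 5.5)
The plan is a standard localization argument for convex $M$-estimators. We rescale the error $h$ onto the sphere of radius $r:=\tau/(2K_Z)$ and show that any feasible error on that sphere makes the objective increase, which contradicts optimality. Two observations are used throughout.

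\begin{itemize}
\item Since $P_C\beta^*=0$ and $P_C\hat\beta=0$, we have $Z\beta^*=\widetilde Z\beta^*$ and $Z\beta=\widetilde Z\beta$ for every feasible $\beta$. Hence $u=y-\widetilde Z\beta^*=\epsilon$, and the loss in \eqref{eq:huber_model} may be written with $\widetilde Z$ in place of $Z$.
\item If $\|\Delta\|_2\le r$, then $|\widetilde z_i^\top\Delta|\le K_Z r=\tau/2$ for all $i$. So for $i\in\mathcal I_\tau$ the whole segment between $\epsilon_i$ and $\epsilon_i-\widetilde z_i^\top\Delta$ lies in $[-\tau,\tau]$, where $h_\tau$ is exactly quadratic.
\end{itemize}

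\textbf{Step 1 (rescaling).} Suppose, for contradiction, that $\|h\|_2\ge r$. Put $t:=r/\|h\|_2\in(0,1]$, $\Delta:=th$ and $\tilde\beta:=\beta^*+\Delta$. Then $\tilde\beta$ is a convex combination of $\beta^*$ and $\hat\beta$, so $C^\top\Delta=0$ and $\tilde\beta$ is feasible. Let $F(\beta)=\mathcal H_\tau(y-Z\beta)+\lambda\|\beta\|_1$. Since $F$ is convex and $F(\hat\beta)\le F(\beta^*)$, we get $F(\tilde\beta)\le F(\beta^*)$.

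\textbf{Step 2 (lower bound on the objective difference).} Write $a_i:=\widetilde z_i^\top\Delta$ and split the loss difference as
\[
\frac1n\sum_i\big[h_\tau(\epsilon_i-a_i)-h_\tau(\epsilon_i)+\psi_\tau(\epsilon_i)a_i\big]-\Big\langle \tfrac1n\widetilde Z^\top\Psi_\tau(\epsilon),\Delta\Big\rangle .
\]
Each bracket is nonnegative by Lemma~\ref{convex}. For $i\in\mathcal I_\tau$ the second observation shows the bracket equals $a_i^2/2$ exactly. By Assumption~\ref{zaoshengxiang}, the inner product is at most $(\lambda/\kappa)\|\Delta\|_1$. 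For the penalty, the usual decomposition on $S$ gives $\|\tilde\beta\|_1-\|\beta^*\|_1\ge\|\Delta_{S^c}\|_1-\|\Delta_S\|_1$. Combining these with $F(\tilde\beta)\le F(\beta^*)$ yields
\[
0\ \ge\ \frac1{2n}\sum_{i\in\mathcal I_\tau}(\widetilde z_i^\top\Delta)^2-\frac{\lambda}{\kappa}\big(\|\Delta_S\|_1+\|\Delta_{S^c}\|_1\big)+\lambda\big(\|\Delta_{S^c}\|_1-\|\Delta_S\|_1\big).
\]

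\textbf{Step 3 (cone condition and contradiction).} Dropping the nonnegative quadratic term gives $(1-1/\kappa)\|\Delta_{S^c}\|_1\le(1+1/\kappa)\|\Delta_S\|_1$. Together with $C^\top\Delta=0$, this puts $\Delta\in\mathcal C(S)$. The restricted curvature in Assumption~\ref{RE} then applies:
\[
\frac{\phi_I}{2}\|\Delta\|_2^2\le\lambda\frac{\kappa+1}{\kappa}\|\Delta_S\|_1\le\lambda\frac{\kappa+1}{\kappa}\sqrt s\,\|\Delta\|_2 .
\]
Hence $\|\Delta\|_2\le 2(\kappa+1)\sqrt s\lambda/(\kappa\phi_I)$, which is strictly less than $r$ by \eqref{eq:localization_condition}. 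This contradicts $\|\Delta\|_2=r$, so $\|h\|_2<r$.

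\textbf{Main obstacle.} The delicate point is Step 2: curvature is only available for $\|\Delta\|_2\le r$ and only on the indices in $\mathcal I_\tau$. The rescaling by $t$ is exactly what makes this usable without knowing $\|h\|_2$ in advance. One must also check that the rescaled $\Delta$ still satisfies $C^\top\Delta=0$ and inherits the cone condition, which is why the cone is re-derived for $\Delta$ itself rather than for $h$.
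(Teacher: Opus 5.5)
Your proof is correct and follows the paper's argument essentially step for step: rescale $h$ to $th$ with $\|th\|_2=\tau/(2K_Z)$, transfer the basic inequality by convexity, use the exact quadratic behavior of $h_\tau$ on $\mathcal I_\tau$ together with Assumption~\ref{zaoshengxiang} and the restricted curvature over $\mathcal C(S)$, and contradict \eqref{eq:localization_condition}. The only cosmetic difference is that you derive the cone condition directly for the rescaled vector, whereas the paper first derives it for $h$; since $\mathcal C(S)$ is invariant under positive scaling, either way works.
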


\begin{pf}
	Since both $\hat\beta$ and $\beta^*$ are feasible, the optimality of
	$\hat\beta$ gives
	\[
	\frac1n\sum_{i=1}^n
	h_\tau(\epsilon_i-\widetilde z_i^\top h)
	+\lambda\|\beta^*+h\|_1
	\le
	\frac1n\sum_{i=1}^n h_\tau(\epsilon_i)
	+\lambda\|\beta^*\|_1.
	\]
	Using Lemma~\ref{convex} and Assumption~\ref{zaoshengxiang}, we obtain
	$
	\|h_{S^c}\|_1
	\le
	a_\kappa\|h_S\|_1,
	$
	and hence $h\in\mathcal C(S)$.
	
	Suppose that $\|h\|_2\ge r_0$, where
	$
	r_0:= \tau  / (2K_Z).
	$
	Let
	\[
	t:=\frac{r_0}{\|h\|_2},
	\qquad
	\bar h:=th.
	\]
	By convexity of the objective function, the same basic inequality holds with $h$ replaced by $\bar h$. Hence $\bar h\in\mathcal C(S)$ and $\|\bar h\|_2=r_0$.
	Moreover,
	\[
	|\widetilde z_i^\top\bar h|
	\le
	K_Zr_0
	=
	\tau/2.
	\]
	For $i\in\mathcal I_\tau$, we also have $|\epsilon_i|\le\tau/2$. Thus the Huber loss is quadratic along the line segment between $\epsilon_i$ and
	$\epsilon_i-\widetilde z_i^\top\bar h$, and therefore
\[
\frac1n\sum_{i=1}^n\left[h_\tau(\epsilon_i-\widetilde z_i^\top\bar h)-h_\tau(\epsilon_i)+\psi_\tau(\epsilon_i)\widetilde z_i^\top\bar h\right]\ge\frac1{2n}\sum_{i\in\mathcal I_\tau}(\widetilde z_i^\top\bar h)^2\ge\frac{\phi_I}{2}\|\bar h\|_2^2.
\]
	On the other hand, the basic inequality and Assumption~\ref{zaoshengxiang} give
	\[
		\frac1n\sum_{i=1}^n
		\left[
		h_\tau(\epsilon_i-\widetilde z_i^\top\bar h)
		-h_\tau(\epsilon_i)
		+\psi_\tau(\epsilon_i)\widetilde z_i^\top\bar h
		\right]
		\le
		\frac{\kappa+1}{\kappa}
		\lambda\sqrt{s}\|\bar h\|_2.
	\]
	Hence
	\[
	\|\bar h\|_2
	\le
	\frac{2(\kappa+1)\sqrt{s}\lambda}
	{\kappa\phi_I},
	\]
	which contradicts \eqref{eq:localization_condition}. Therefore,
	$\|h\|_2<\tau/(2K_Z)$.
\end{pf}

\begin{lem}\label{lem:weighted_rip}
	Under Assumptions~\ref{zaoshengxiang} and \ref{RE},
	\[
	\delta_{2s}^{-}
	\left(
	\frac{W^{1/2}\widetilde Z}{\sqrt n}
	\right)
	\ge
	{\delta}^{-}_{2s},
	\qquad
	\delta_{2s}^{+}
	\left(
	\frac{W^{1/2}\widetilde Z}{\sqrt n}
	\right)
	\le
	{\delta}^{+}_{2s}.
	\]
	Consequently,
	\begin{equation}\label{eq:weighted_RIP_relation}
		(3\kappa-1)
		\delta_{2s}^{-}
		\left(
		\frac{W^{1/2}\widetilde Z}{\sqrt n}
		\right)
		-
		(\kappa+1)
		\delta_{2s}^{+}
		\left(
		\frac{W^{1/2}\widetilde Z}{\sqrt n}
		\right)
		\ge
		4\kappa\phi_0.
	\end{equation}
\end{lem}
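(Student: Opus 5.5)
The plan is to compare $W$ with the indicator of $\mathcal I_\tau$ entrywise, using the localization in Lemma~\ref{lem:localization}. Recall that $W=\int_0^1\mathrm{diag}\big(\psi_\tau'(u-t\widetilde Zh)\big)\,dt$. By Lemma~\ref{lem:W}, its diagonal entries $W_{ii}$ lie in $[0,1]$.

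The upper bound is immediate. For any $2s$-sparse $\alpha$,
\[
\frac1n\|W^{1/2}\widetilde Z\alpha\|_2^2=\frac1n\sum_{i=1}^n W_{ii}(\widetilde z_i^\top\alpha)^2\le\frac1n\|\widetilde Z\alpha\|_2^2\le\delta^+_{2s}\|\alpha\|_2^2 .
\]
Taking the supremum over such $\alpha$ gives $\delta^+_{2s}(W^{1/2}\widetilde Z/\sqrt n)\le\delta^+_{2s}$.

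For the lower bound, I will show that $W_{ii}=1$ for every $i\in\mathcal I_\tau$. The argument of $\psi_\tau'$ in the $i$th entry is $u_i-t\widetilde z_i^\top h$ with $t\in[0,1]$. Since $C^\top\beta^*=0$, we have $\widetilde Z\beta^*=Z\beta^*$, so $u=y-Z\beta^*=\epsilon$. Lemma~\ref{lem:localization} gives $\|h\|_2<\tau/(2K_Z)$. Combining this with Cauchy--Schwarz and $\|\widetilde z_i\|_2\le K_Z$ yields $|\widetilde z_i^\top h|<\tau/2$. Hence, for $i\in\mathcal I_\tau$ and all $t\in[0,1]$,
\[
|\epsilon_i-t\widetilde z_i^\top h|<\tau/2+\tau/2=\tau .
\]
On the open interval $(-\tau,\tau)$ the function $\psi_\tau$ is the identity, so $\psi_\tau'=1$ along the whole segment and $W_{ii}=1$.

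It then follows that, for every $2s$-sparse $\alpha$,
\[
\frac1n\sum_i W_{ii}(\widetilde z_i^\top\alpha)^2\ge\frac1n\sum_{i\in\mathcal I_\tau}(\widetilde z_i^\top\alpha)^2\ge\delta^-_{2s}\|\alpha\|_2^2 ,
\]
using $W_{ii}\ge0$ off $\mathcal I_\tau$ and Assumption~\ref{RE}.

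The consequence \eqref{eq:weighted_RIP_relation} follows by monotonicity. The coefficient $3\kappa-1$ is positive since $\kappa>1$, so $(3\kappa-1)\delta^-_{2s}(\cdot)\ge(3\kappa-1)\delta^-_{2s}$ and $-(\kappa+1)\delta^+_{2s}(\cdot)\ge-(\kappa+1)\delta^+_{2s}$. Adding these and invoking the last displayed inequality of Assumption~\ref{RE} gives the bound $4\kappa\phi_0$.

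The main point needing care is the claim $\psi_\tau'\equiv1$ on the segment. This requires the strict inequality from Lemma~\ref{lem:localization}, which keeps the segment inside the open quadratic region and away from the kinks at $\pm\tau$ where the a.e. derivative is undefined. It also requires the identification $u=\epsilon$, which uses the feasibility of $\beta^*$.
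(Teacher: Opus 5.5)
Your proposal is correct and follows essentially the same argument as the paper: the localization from Lemma~\ref{lem:localization} forces $W_{ii}=1$ on $\mathcal I_\tau$, and together with $0\le W_{ii}\le 1$ this gives both weighted RIP bounds, from which the consequence follows via Assumption~\ref{RE}. Your version is slightly more careful than the paper's, which writes only $|\epsilon_i-t\widetilde z_i^\top h|\le\tau$ and leaves implicit both the identification $u=\epsilon$ and the strict inequality needed to keep the segment away from the kinks of $\psi_\tau$.
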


\begin{pf}
	By Lemma~\ref{lem:localization},
	\[
	|\widetilde z_i^\top h|
	\le
	K_Z\|h\|_2
	<
	\tau/2.
	\]
	For every $i\in\mathcal I_\tau$ and $t\in[0,1]$,
	\[
	|\epsilon_i-t\widetilde z_i^\top h|
	\le
	\tau.
	\]
	Hence $W_{ii}=1$ for $i\in\mathcal I_\tau$, while
	$0\le W_{ii}\le1$ for all $i$.
	
	For every $2s$-sparse vector $\alpha$,
	\[
		\frac1n\|W^{1/2}\widetilde Z\alpha\|_2^2
		=
		\frac1n\sum_{i=1}^n
		W_{ii}(\widetilde z_i^\top\alpha)^2\\
		\ge
		\frac1n\sum_{i\in\mathcal I_\tau}
		(\widetilde z_i^\top\alpha)^2
		\ge
		{\delta}^{-}_{2s}\|\alpha\|_2^2,
	\]
	and
	\[
	\frac1n\|W^{1/2}\widetilde Z\alpha\|_2^2
	\le
	\frac1n\|\widetilde Z\alpha\|_2^2
	\le
	{\delta}^{+}_{2s}\|\alpha\|_2^2.
	\]
	The result follows from Assumption~\ref{RE}.
\end{pf}

\begin{rmk}\label{rmk:assumptions}
	Assumption~\ref{KKT Condition} depends only on the constraint matrix, while
	Assumption~\ref{zaoshengxiang} is a standard high-dimensional score condition.
	Assumption~\ref{RE} is imposed on the true-error inlier design rather than on the
	fitted matrix $W^{1/2}\widetilde Z$. Thus, the conditions used below do not
	require assumptions on the unknown fitted multiplier or the fitted weight matrix.
\end{rmk}

\begin{theorem}
	Let $\hat{\beta}$ be the estimated solution of model~\eqref{eq:huber_model},
	and let $\beta^*$ denote the true coefficient vector, assumed to be $s$-sparse.
	Suppose that Assumptions~\ref{KKT Condition}--\ref{RE} hold. Then
	\[
	\|\hat{\beta}-\beta^*\|_1
	\le
	\frac{s\lambda\big(k_0+1/\kappa\big)}{\phi_0}.
	\]
\end{theorem}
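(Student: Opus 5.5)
The plan is to combine the KKT bound \eqref{eq:k0bound} with the noise condition of Assumption~\ref{zaoshengxiang} through the weighted representation of Lemma~\ref{lem:W}. Then I will lower-bound the resulting quadratic form by a shifting (ROC/RIP) argument using Lemma~\ref{lem:weighted_rip}.

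\textbf{Step 1: a bound on the weighted quadratic form.} Since $P_C\beta^*=0$, we have $y-\widetilde Z\beta^*=\epsilon$, so $u=\epsilon$ in Lemma~\ref{lem:W}. Also $y-\widetilde Z\hat\beta=u-\widetilde Zh$. Lemma~\ref{lem:W} then gives
\[
\frac1n\widetilde Z^\top W\widetilde Zh=\frac1n\widetilde Z^\top\Psi_\tau(\epsilon)-\frac1n\widetilde Z^\top\Psi_\tau(y-\widetilde Z\hat\beta).
\]
By Assumption~\ref{zaoshengxiang} and \eqref{eq:k0bound},
\[
\Big\|\tfrac1n\widetilde Z^\top W\widetilde Zh\Big\|_\infty\le \lambda\,(k_0+1/\kappa).
\]
Pairing this with $h$ and using the cone property $\|h\|_1\le(1+a_\kappa)\|h_S\|_1$ from the proof of Lemma~\ref{lem:localization} yields an upper bound on $\frac1n\|W^{1/2}\widetilde Zh\|_2^2$. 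Alternatively, I will keep only the pairing with the top-$s$ block $h_{S_h}$, which is $s$-sparse:
\[
\Big|\tfrac1n\langle W^{1/2}\widetilde Zh_{S_h},W^{1/2}\widetilde Zh\rangle\Big|\le\lambda(k_0+1/\kappa)\|h_{S_h}\|_1\le \lambda(k_0+1/\kappa)\sqrt s\|h_{S_h}\|_2 .
\]

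\textbf{Step 2: a lower bound via the shifting argument.} I will decompose $h_{S_h^c}$ into consecutive $s$-sparse blocks $h_{S_1},h_{S_2},\dots$ ordered by magnitude. Write $A:=W^{1/2}\widetilde Z/\sqrt n$, and let $\delta^\pm$ and $\theta_{s,s}$ denote its RIP and ROC constants. Then
\[
\langle Ah_{S_h},Ah\rangle\ge\delta^-_{s}\|h_{S_h}\|_2^2-\theta_{s,s}\|h_{S_h}\|_2\sum_{j\ge1}\|h_{S_j}\|_2 .
\]
The standard inequality $\sum_j\|h_{S_j}\|_2\le\|h_{S_h^c}\|_1/\sqrt s$ controls the tail sum. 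The cone condition transfers to $S_h$, since $S_h$ carries the largest entries: $\|h_{S_h^c}\|_1\le a_\kappa\|h_{S_h}\|_1\le a_\kappa\sqrt s\|h_{S_h}\|_2$. The ROC constant is bounded by the parallelogram identity, $\theta_{s,s}\le(\delta^+_{2s}-\delta^-_{2s})/2$. Combining these, the coefficient of $\|h_{S_h}\|_2^2$ is at least
\[
\delta^-_{2s}-\tfrac{a_\kappa}{2}(\delta^+_{2s}-\delta^-_{2s})=\frac{(3\kappa-1)\delta^-_{2s}-(\kappa+1)\delta^+_{2s}}{2(\kappa-1)} .
\]
This matches the expression in \eqref{eq:weighted_RIP_relation}, which Lemma~\ref{lem:weighted_rip} bounds below by $4\kappa\phi_0/(2(\kappa-1))\ge\phi_0$ up to the constants I need to track.

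\textbf{Step 3: conclusion.} Steps 1 and 2 together give $\phi_0\|h_{S_h}\|_2\lesssim\sqrt s\lambda(k_0+1/\kappa)$. Then
\[
\|h\|_1\le(1+a_\kappa)\|h_{S_h}\|_1\le(1+a_\kappa)\sqrt s\|h_{S_h}\|_2 ,
\]
and the factor $1+a_\kappa=2\kappa/(\kappa-1)$ should cancel against the $4\kappa/(2(\kappa-1))$ from Step 2. This produces exactly $s\lambda(k_0+1/\kappa)/\phi_0$.

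\textbf{Main obstacle.} The hard part is the constant bookkeeping in Step 2. The ROC bound must be arranged so that the $2\kappa/(\kappa-1)$ factors cancel exactly. I also have to make sure the cone condition, which is stated on $S$, is legitimately transferred to $S_h$. If the parallelogram bound loses a factor, I will instead use the sharper splitting from the Cai–Zhang type shifting inequality.
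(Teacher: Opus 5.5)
Your Steps 1 and 3 follow the paper's argument, and your transfer of the cone from $S$ to $S_h$ is valid. The constant bookkeeping you worry about also closes exactly: the factor $1+a_\kappa=2\kappa/(\kappa-1)$ cancels against the lower bound $2\kappa\phi_0/(\kappa-1)$.

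The gap is the tail estimate in Step 2. The inequality $\sum_{j\ge1}\|h_{S_j}\|_2\le\|h_{S_h^c}\|_1/\sqrt s$ is false. The shifting bound $\|h_{S_j}\|_2\le\|h_{S_{j-1}}\|_1/\sqrt s$ charges each block to the \emph{preceding} one, so it only gives $\sum_{j\ge2}\|h_{S_j}\|_2\le\|h_{S_h^c}\|_1/\sqrt s$. The first tail block costs an extra $\|h_{S_1}\|_2\le\|h_{S_h}\|_1/\sqrt s$. For example, with $s=2$ and $h=(1,1,1,0,\dots,0)$ the left side is $1$ and the right side is $1/\sqrt2$.

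The corrected bound gives the factor $1+a_\kappa$ instead of $a_\kappa$. The coefficient of $\|h_{S_h}\|_2^2$ then becomes $\bigl[(2\kappa-1)\delta^-_{2s}-\kappa\delta^+_{2s}\bigr]/(\kappa-1)$. This is smaller than the paper's quantity by $(\delta^+_{2s}-\delta^-_{2s})/2$, and it can even be negative while Assumption~\ref{RE} holds (e.g.\ $\kappa=3$, $\delta^-_{2s}=1$, $\delta^+_{2s}=1.9$, $\phi_0=0.03$).

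This loss is intrinsic to summing blockwise $\ell_2$ norms, not a bookkeeping slip. Take $h_{S_h}=\mathbf 1_s$, a first tail block $(1,\delta,\dots,\delta)$, and then many constant blocks of $\delta$'s with total $\delta$-mass close to $a_\kappa s-1$. The cone condition holds, yet as $\delta\to0$ you get $\sum_{j\ge1}\|h_{S_j}\|_2\to1+(a_\kappa s-1)/\sqrt s>a_\kappa\sqrt s=a_\kappa\|h_{S_h}\|_2$ for $s\ge2$.

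The missing ingredient is the one the paper uses: bound the cross term against the whole tail at once, via the sparse-representation (convex-combination) Lemma~5.1 of \cite{caililuntuidao}. With $A=W^{1/2}\widetilde Z/\sqrt n$, it states
$|\langle Ah_{S_h},Ah_{S_h^c}\rangle|\le\theta_{s,s}\|h_{S_h}\|_2\sqrt s\,\max\bigl(\|h_{S_h^c}\|_\infty,\|h_{S_h^c}\|_1/s\bigr)$.
Since $\|h_{S_h^c}\|_\infty\le\|h_{S_h}\|_1/s$, $\|h_{S_h^c}\|_1\le a_\kappa\|h_{S_h}\|_1$ and $a_\kappa\ge1$, the maximum is at most $a_\kappa\|h_{S_h}\|_1/s$. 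Using $\|h_{S_h}\|_1\le\sqrt s\|h_{S_h}\|_2$, this gives exactly $a_\kappa\theta_{s,s}\|h_{S_h}\|_2^2$, and the rest of your computation then goes through unchanged.

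Your fallback points at the wrong step. The parallelogram bound $\theta_{s,s}\le(\delta^+_{2s}-\delta^-_{2s})/2$ loses nothing and is used as is in the paper. It is the tail-sum step that must be replaced by the polytope lemma.
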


\begin{pf}
	Since both $\hat{\beta}$ and $\beta^*$ satisfy the log-contrast
	constraint in \eqref{eq:huber_model}, the optimality of
	$\hat{\beta}$ implies that
	\[
		\frac{1}{n}\sum_{i=1}^n
		h_\tau\!\left(y_i-z_i^\top\hat{\beta}\right)
		+\lambda\|\hat{\beta}\|_1\\
		\le
		\frac{1}{n}\sum_{i=1}^n
		h_\tau\!\left(y_i-z_i^\top\beta^*\right)
		+\lambda\|\beta^*\|_1.
	\]
	Using $Z\hat\beta=\widetilde Z\hat\beta$,
	$Z\beta^*=\widetilde Z\beta^*$, and
	$y_i=z_i^\top\beta^*+\epsilon_i$, we have
	\begin{equation}\label{hebingshizi}
		\frac{1}{n}\sum_{i=1}^n
		\Big[
		h_\tau(\epsilon_i-\widetilde z_i^\top h)
		-h_\tau(\epsilon_i)
		\Big]
		\le
		\lambda\Big(\|\beta^*\|_1-\|\hat\beta\|_1\Big).
	\end{equation}
	Notice that $S:=\mathrm{supp}(\beta^*)$. It holds that
	\[
	\begin{aligned}
		\|\beta^*\|_1-\|\hat{\beta}\|_1
		&=\|\beta^*_{S}\|_1-\|\hat{\beta}_{S}\|_1
		-\|\hat{\beta}_{S^c}\|_1\\
		&\le
		\|\beta^*_{S}-\hat{\beta}_{S}\|_1
		-\|h_{S^c}\|_1\\
		&\le
		\|h_{S_h}\|_1-\|h_{S_h^c}\|_1.
	\end{aligned}
	\]
	where $S_h$ is the index set of the largest $s$ entries of $h$ in absolute value, and $S_h^c$ is its complement.
	
	Using Lemma~\ref{convex} and Assumption~\ref{zaoshengxiang}, we have
\[
\begin{aligned}
	\frac{1}{n}\sum_{i=1}^n\Big[h_\tau(\epsilon_i-\widetilde z_i^\top h)-h_\tau(\epsilon_i)\Big]
	&\ge -\frac{1}{n}h^\top\widetilde Z^\top\Psi_\tau(\epsilon)\\
	&\ge -\frac{1}{n}\|\widetilde Z^\top\Psi_\tau(\epsilon)\|_\infty\|h\|_1
	\ge -\frac{\lambda}{\kappa}\Big(\|h_{S_h}\|_1+\|h_{S_h^c}\|_1\Big).
\end{aligned}
\]
	Combining this with \eqref{hebingshizi}, we obtain
	\[
	-\frac{1}{\kappa}
	\big(\|h_{S_h}\|_1+\|h_{S_h^c}\|_1\big)
	\le
	\|h_{S_h}\|_1-\|h_{S_h^c}\|_1.
	\]
	Therefore,
	\begin{equation}\label{eq:cone}
		\|h_{S_h^c}\|_1
		\le
		\frac{\kappa+1}{\kappa-1}\|h_{S_h}\|_1.
	\end{equation}
	
	By \eqref{eq:k0bound},
	\[
	\left\|
	\widetilde Z^\top
	\Psi_\tau(y-\widetilde Z\hat\beta)
	\right\|_\infty
	\le
	k_0n\lambda.
	\]
	Since
	\[
	y-\widetilde Z\hat\beta
	=
	\epsilon-\widetilde Zh,
	\]
	Lemma~\ref{lem:W} and Assumption~\ref{zaoshengxiang} imply
\[
\|\widetilde Z^\top W\widetilde Zh\|_\infty\le\|\widetilde Z^\top\Psi_\tau(\epsilon)\|_\infty+\|\widetilde Z^\top\Psi_\tau(\epsilon-\widetilde Zh)\|_\infty\le\left(\frac1\kappa+k_0\right)n\lambda.
\]
	
	Applying Lemma~5.1 in \cite{caililuntuidao}, we can get
	\[
	\begin{aligned}
		&\abs{
			\inner{
				W^{1/2}\widetilde Zh_{S_h^c}
			}{
				W^{1/2}\widetilde Zh_{S_h}
		}}\\
		&\le
		n\,\theta_{s,s}
		\left(
		\frac{W^{1/2}\widetilde Z}{\sqrt n}
		\right)
		\|h_{S_h}\|_2
		\max\left(
		\|h_{S_h^c}\|_\infty,\,
		\frac{\|h_{S_h^c}\|_1}{s}
		\right)\sqrt{s}\\
		&\le
		n\,\frac{\kappa+1}{\kappa-1}
		\theta_{s,s}
		\left(
		\frac{W^{1/2}\widetilde Z}{\sqrt n}
		\right)
		\|h_{S_h}\|_2^2.
	\end{aligned}
	\]
	where we used
	\[
	\|h_{S_h^c}\|_\infty
	\le
	\frac{\|h_{S_h}\|_1}{s},
	\qquad
	\frac{\|h_{S_h^c}\|_1}{s}
	\le
	\frac{\kappa+1}{\kappa-1}
	\frac{\|h_{S_h}\|_1}{s},
	\]
	together with
	$\|h_{S_h}\|_1\le\sqrt{s}\|h_{S_h}\|_2$.
	
	Then,
\[
\begin{aligned}
	\left(k_0+\frac{1}{\kappa}\right)n\lambda\|h_{S_h}\|_1
	&\ge \|\widetilde Z^\top W\widetilde Zh\|_\infty\|h_{S_h}\|_1
	\ge \inner{\widetilde Z^\top W\widetilde Zh}{h_{S_h}}\\
	&=\|W^{1/2}\widetilde Zh_{S_h}\|_2^2
	+\inner{W^{1/2}\widetilde Zh_{S_h^c}}{W^{1/2}\widetilde Zh_{S_h}}\\
	&\ge n\left[
	\delta^-_{2s}\left(\frac{W^{1/2}\widetilde Z}{\sqrt n}\right)
	-\frac{\kappa+1}{\kappa-1}
	\theta_{s,s}\left(\frac{W^{1/2}\widetilde Z}{\sqrt n}\right)
	\right]\|h_{S_h}\|_2^2\\
	&\ge \frac{n}{2(\kappa-1)}
	\left[
	(3\kappa-1)\delta^-_{2s}\left(\frac{W^{1/2}\widetilde Z}{\sqrt n}\right)
	-(\kappa+1)\delta^+_{2s}\left(\frac{W^{1/2}\widetilde Z}{\sqrt n}\right)
	\right]
	\frac{\|h_{S_h}\|_1^2}{s}.
\end{aligned}
\]
	where the last inequality follows from Lemma~1 in
	\cite{Kang2016InvalidIVsisVIVE}, which states that for any matrix $A$,
	\[
	\theta_{k_1,k_2}(A)
	\le
	\frac12
	\Bigl(
	\delta^+_{k_1+k_2}(A)
	-
	\delta^-_{k_1+k_2}(A)
	\Bigr),
	\]
	together with
	$\|h_{S_h}\|_1\le\sqrt{s}\|h_{S_h}\|_2$.
	
	Using Lemma~\ref{lem:weighted_rip}, we have
\[
(3\kappa-1)\delta^-_{2s}\left(\frac{W^{1/2}\widetilde Z}{\sqrt n}\right)-(\kappa+1)\delta^+_{2s}\left(\frac{W^{1/2}\widetilde Z}{\sqrt n}\right)\ge 4\kappa\phi_0.
\]
	Therefore,
	\[
	\|h_{S_h}\|_1
	\le
	s\,
	\frac{
		\left(k_0+\frac1\kappa\right)(\kappa-1)\lambda
	}{
		2\kappa\phi_0
	}.
	\]
	Finally, by \eqref{eq:cone},
	\[
	\|\hat{\beta}-\beta^*\|_1
	=
	\|h_{S_h}\|_1+\|h_{S_h^c}\|_1
	\le
	\frac{2\kappa}{\kappa-1}\|h_{S_h}\|_1.
	\]
	Combining this with the bound for $\|h_{S_h}\|_1$, we obtain
	\[
	\|\hat{\beta}-\beta^*\|_1
	\le
	\frac{s\lambda\bigl(k_0+1/\kappa\bigr)}{\phi_0}.
	\]
\end{pf}

	\section{Simulation Studies}
	\label{sec:simulation}
	\subsection{Stopping Criterion}
	In this section, we evaluate the performance of H-RegAd (Algorithm~\ref{alg:sGS-ADMM}) and compare it with state-of-the-art methods for subcomposition regression \citep{Shi2016MicrobiomeRegression} and robust log-contrast regression \citep{Mishra2022RobustCompositional}. All numerical experiments were conducted in R (version 4.4.3) on a PC equipped with an Intel i5-10200H 2.4 GHz processor and 16 GB of RAM. 
	Moreover, all iterative algorithms were terminated when the KKT residuals defined below are sufficiently small. 
	
	We now present the KKT conditions for the dual problem, which will be used to define the stopping criteria. From convex optimization theory, solving the dual problem is equivalent to finding a quadruple $(u,v,w,\beta)$ satisfying the following KKT system:
	\[
	\left\{
	\begin{array}{l}
		y - Z\beta \in \partial H_{\tau}^{*}(u), \\[2mm]
		\beta \in \partial p^{*}(v), \\[2mm]
		C^{\top}\beta = 0, \\[2mm]
		v - Z^{\top}u - Cw = 0.
	\end{array}
	\right.
	\]
	Based on the above KKT system, we define the following relative residuals:
	\[
	\begin{aligned}
		\eta_{1}
		&:=\frac{\left\|u-\Pi_{\mathbb{B}_{\infty}^{(\tau/n)}}
		\bigl(n^{-1}(y-Z\beta)\bigr)\right\|}{1+\|u\|},
		\qquad
		\eta_{2}
		:=\frac{\left\|v-\Pi_{\mathbb{B}_{\infty}^{(\lambda)}}(v+\beta)
		\right\|}{1+\|v\|},\\[1mm]
		\eta_{3}
		&:=\frac{\|C^{\top}\beta\|}{1+\|\beta\|},
		\qquad
		\eta_{4}
		:=\frac{\|v-Z^{\top}u-Cw\|}
		{1+\|v\|+\|Z^{\top}u\|+\|Cw\|}.
	\end{aligned}
	\]
	Hence, the KKT residual for problem (\ref{eq:dual_box}) takes the following form, which is a natural choice for the stopping criterion of Algorithm \ref{alg:sGS-ADMM}:
	\[
	\eta_{d}=\max\big\{\eta_{1},\eta_{2},\eta_{3},\eta_{4}\big\}.
	\]
	Therefore, a natural stopping criterion for the proposed dual Algorithm \ref{alg:sGS-ADMM} is
	$\eta_{d}\le 10^{-4}$, with the maximum number of iterations set at 2000. We stop the main iteration when the improvement in the objective function of (\ref{eq:dual_box}) is below a threshold.

	\subsection{Simulation Settings}
	Following the simulation settings of \cite{Mishra2022RobustCompositional}, we generate positive-valued data
	$\mathbf{W}=(\mathbf{w}_1,\ldots,\mathbf{w}_n)^\top=(w_{ij}) \in \mathbb{R}^{n\times p}$,
	where $\mathbf{w}_i$ are independent realizations of a multivariate lognormal distribution, i.e.,
	$\log(\mathbf{w}_i)\sim \mathcal{N}(\boldsymbol{\mu},\boldsymbol{\Sigma})$.
	Here, the mean vector
	$\boldsymbol{\mu}=\big(\log(p/2)\mathbf{1}_5^\top,\;\mathbf{0}_{p-5}^\top\big)^\top\in\mathbb{R}^p$
	and the covariance matrix $\boldsymbol{\Sigma}\in\mathbb{R}^{p\times p}$ satisfies
	$\Sigma_{ij}=0.5^{|i-j|}$.
	Then, we perform total-sum normalization on $\mathbf{W}$ and apply a log-transformation to the resulting compositions to yield the covariate matrix
	$Z=(z_{ij})_{n\times p}$ with
	\[
	z_{ij}
	=
	\log\left(\frac{w_{ij}}{\sum_{l=1}^p w_{il}}\right).
	\]
	For simplicity, we include a single non-compositional covariate in the form of an intercept
	${N}=\mathbf{1}_n$.
	According to the generative model in \eqref{eq:huber_model}, we define
	${X}:=[{N}\ Z]$
	and set
	$
		\beta^*
		=\Bigl(
		\beta_0,\;1,\;-0.8,\;0.4,\;0,\;0,\;-0.6,\;0,\;0,0,\;0,\;-1.5,\;0,\\
		\;1.2,\;0,\;0,\;0.3,\;
		\mathbf{0}_{p-16}
		\Bigr).
	$
	with $\beta_0=0.5$ as the choice in \cite{Shi2016MicrobiomeRegression}.
	To enforce subcompositional coherence, we consider the subcomposition constraint matrix $C$ with $m=4$ subgroups of the form
	\begin{equation}\label{eq:C}
		C^{\top}
		=\begin{bmatrix}\mathbf{0}_{4\times 1} & \bar C^{\top}\end{bmatrix}_{4\times(p+1)},
		\qquad
		\bar C^{\top}
		=\begin{bmatrix}
			\mathbf{1}_{p_1}^{\top} & 0 & 0 & 0\\
			0 & \mathbf{1}_{p_2}^{\top} & 0 & 0\\
			0 & 0 & \mathbf{1}_{p_3}^{\top} & 0\\
			0 & 0 & 0 & \mathbf{1}_{p_4}^{\top}
		\end{bmatrix}.
	\end{equation}
	where $p= {\textstyle \sum_{i=1}^{4}} p_{i} $ with $p_i=\varsigma_{i+1}-\varsigma_i$ and $\varsigma:=(0,10,16,20,p)$, and index sets $\mathbb{A}_i=\{\varsigma_i+1,\varsigma_i+2,\ldots,\varsigma_{i+1}\}$.
	Using these settings, we first generate the clean response vector $y$ with additive noise ${\epsilon}$, where $\epsilon_i \sim \mathcal{N}(0,\sigma^2)$.
	The error standard deviation was chosen as
	\[
	\sigma = \frac{\|Z\beta^*\|}{\sqrt{n}\times \mathrm{SNR}},
	\]
	where the signal-to-noise ratio was fixed at $\mathrm{SNR}=3$. In the simulation study, we consider a fixed sample size $n=200$ and a high-dimensional setting with $p=300$.
	
	To assess the ability of H-RegAd to identify outliers in the response, we considered several contamination scenarios. Specifically, the proportion of outliers was set to $\mathcal{O} \in \{0\%,10\%, 20\%, 30\%, 40\%\}$ and the number of outliers was set $O=\mathcal{O}n$. Moderate outliers were generated by adding a shift of $6\sigma$ to the corresponding response entries, i.e., $y_i\leftarrow y_i + s_f$ with $s_f=6\sigma$ for $i \in \mathcal{O}[n]$.
	We also consider a more challenging setting in which half of the $O$ response outliers were additionally set as leverage points. The leverage points in $\mathbf{Z}$, denoted by $L$ in the simulation settings, were generated by modifying the corresponding observations in the count matrix $\mathbf{W}$. Specifically, the first $O/2$ observations of $\mathbf{W}$ were replaced by leveraged observations. For each leveraged observation, one taxon within each subgroup was inflated to a relatively large value, while the remaining taxa in the same subgroup were reduced to small values. For each of the $m$ subgroups, we extracted the corresponding columns of $\mathbf{W}$. After adding the constant $4$ to the first column, we arranged this column in descending order and the remaining columns in ascending order. The first $O/2$ observations in the resulting matrix were then treated as leverage points. Here, $L=0$ indicates that no leverage points were introduced, whereas $L=1$ indicates that half of the response outliers were also contaminated in the design matrix. Each experimental setting was replicated $100$ times.

	We assess the estimation accuracy of the sparse regression coefficients using the scaled estimation error defined as
	\[
	\mathrm{Er}({\hat\beta}) =
	\frac{100\,\lVert \hat{{\beta}}-{\beta}^*\rVert_2}{p},
	\]
	where $\hat{{\beta}}$ and ${\beta}^*$ denote the estimated and true coefficient vectors, respectively.
	Furthermore, to assess the ability of each method to detect outliers, we report the numbers of false positives (FP) and false negatives (FN), obtained by comparing the estimated outlier set with the true outlier set. 
	The overall outlier detection error is measured by the Hamming distance defined as
	$\mathrm{HM}=\mathrm{FN}+\mathrm{FP}$.
	Finally, to further evaluate the predictive performance of each method, we report the mean absolute error (MAE), accuracy (Acc), and precision (Prec), defined as, respectively,
\[
\mathrm{MAE}:=\frac{1}{n}\sum_{i=1}^{n}\lvert y_i-\hat{y}_i\rvert,\qquad
\mathrm{Acc}:=\frac{\mathrm{TP}+\mathrm{TN}}{\mathrm{TP}+\mathrm{TN}+\mathrm{FP}+\mathrm{FN}},\qquad
\mathrm{Prec}:=\frac{\mathrm{TP}}{\mathrm{TP}+\mathrm{FP}}.
\]
	where $\hat{y}_i$ is the predicted response for the $i$-th observation, TP and TN denote the numbers of true positives and true negatives, respectively. Specifically, MAE measures the average absolute prediction error, Acc quantifies the proportion of correctly classified samples, and Prec measures the proportion of true positives among all predicted positive samples.

	\subsection{Simulation Results}
	In this subsection, we aim to report the simulation results of H-RegAd under the outlier setting with shift $s_f=6\sigma$. 
	Specifically, the results under the simple constraint $C=\mathbf{1}_p$ are reported in Table~\ref{tab:C1}, while the results under the block-structured constraint matrix in \eqref{eq:C} are reported in Table \ref{tab:C2}. 
	In addition, we fix all constraint hyperparameters and conduct comparative analyses between H-RegAd and the RobRegCC method proposed in \cite{Mishra2022RobustCompositional}. We assess three penalty configurations for RobRegCC, including hard-ridge (RobRegCC-H), adaptive (RobRegCC-A), and elastic-net (RobRegCC-E), as well as the standard log-contrast model RobRegCC-NR without mean shift adjustment. 
	We note that the R package for RobRegCC is available at \url{https://github.com/amishra-stats/robregcc}, which also includes the RobRegCC-NR method proposed in \cite{Shi2016MicrobiomeRegression}.
	In terms of initialization, RobRegCC-H and RobRegCC-A employ a PSC-based initialization procedure. Meanwhile, the elastic-net variant RobRegCC-E is implemented by letting $\mathbf{w}=\mathbf{1}_{n+p}$, and its corresponding adaptive penalty function is denoted by  RobRegCC-A.

	From Table \ref{tab:C1}, where we impose the zero-sum log-contrast constraint $\mathbf{1}_{p}^{\mathsf T}\beta=0$, we observe that H-RegAd achieves the best overall performance under moderate contamination. We find that H-RegAd yields  smaller estimation error $\mathrm{Er}(\hat{\beta})$ than all RobRegCC variants, and we verify that this performance advantage is consistently retained as the contamination level increases.
	In terms of outlier detection, H-RegAd consistently achieves the smallest Hamming distance while maintaining precision close to
	$1$.  Overall, these results demonstrate that H-RegAd achieves the best trade-off between coefficient estimation accuracy and outlier identification performance.
	
	From Table~\ref{tab:C2}, where the subcomposition constraint matrix $C$ is formed as in \eqref{eq:C}, we observe that H-RegAd again demonstrates the best overall performance under moderate contamination.
	Regarding coefficient estimation, H-RegAd consistently attains smaller estimation errors $\mathrm{Er}(\hat{\beta})$ than the competing methods. 
	For outlier detection, it also achieves superior performance, typically yielding the smallest Hamming distance together with high detection accuracy. 
	Moreover, in contaminated scenarios, its precision is consistently high and even reaches $1$ in several settings. 
	Although H-RegAd does not always produce the lowest MAE, its predictive performance remains highly competitive across nearly all cases.
	Overall, these results further confirm that H-RegAd provides the most reliable and well-balanced performance in both coefficient estimation and outlier identification under the subcomposition constraint.
	
	\begin{sidewaystable}[!p]
		\centering
		\caption{Simulation performance under moderate contamination with the zero-sum constraint
			$(n=200,\ p=300)$.}
		\label{tab:C1}
		
		\fontsize{7}{8}\selectfont
		\setlength{\tabcolsep}{0.25pt}
		\renewcommand{\arraystretch}{1.5}
		\setlength{\extrarowheight}{1.2pt}
		
		\begingroup
		\begin{tabular}{ccc|ccccc|ccccc|ccccc|ccccc|ccccc}
			\hline
			
			\multirow{2}{*}{$L$}
			& \multirow{2}{*}{$p$}
			& \multirow{2}{*}{$\mathcal{O}$}
			& \multicolumn{5}{c|}{\textbf{[RobRegCC-A]}}
			& \multicolumn{5}{c|}{\textbf{[RobRegCC-H]}}
			& \multicolumn{5}{c|}{\textbf{[RobRegCC-E]}}
			& \multicolumn{5}{c|}{\textbf{[RobRegCC-NR]}}
			& \multicolumn{5}{c}{\textbf{[H-RegAd]}} \\
			
			\cline{4-8}
			\cline{9-13}
			\cline{14-18}
			\cline{19-23}
			\cline{24-28}
			
			& & &
			Er & HM & MAE & Acc & Prec
			& Er & HM & MAE & Acc & Prec
			& Er & HM & MAE & Acc & Prec
			& Er & HM & MAE & Acc & Prec
			& Er & HM & MAE & Acc & Prec \\
			
			\hline
			
			0 & 300 & 0
			& 6.896e-3 & 7  & 1.218 & 0.977 & 0.941
			& 6.896e-3 & 7  & 1.218 & 0.977 & 0.941
			& 6.896e-3 & 7  & 1.218 & 0.977 & 0.941
			& 5.857e-3 & 12 & 1.157 & 0.960 & 0.895
			& \textbf{1.120e-3} & \textbf{3}
			& \textbf{1.016} & \textbf{0.990} & \textbf{0.950} \\
			
			0 & 300 & 10\%
			& 5.810e-3 & 10 & 1.617 & 0.967 & 0.912
			& 5.810e-3 & 10 & 1.617 & 0.967 & 0.912
			& 5.827e-3 & 12 & 1.626 & 0.960 & 0.907
			& 1.113e-2 & 8 & 1.796 & 0.953 & 0.899
			& \textbf{1.546e-3} & \textbf{4}
			& \textbf{1.701} & \textbf{0.987} & \textbf{1.000} \\
			
			0 & 300 & 20\%
			& 6.008e-3 & 11 & 1.886 & 0.963 & 0.908
			& 6.008e-3 & 11 & 1.886 & 0.963 & 0.908
			& 6.416e-3 & 8 & 1.920 & 0.970 & 0.915
			& 1.095e-2 & 10 & 1.983 & 0.955 & 0.906
			& \textbf{1.672e-3} & \textbf{4}
			& \textbf{1.863} & \textbf{0.987} & \textbf{1.000} \\
			
			0 & 300 & 30\%
			& 7.761e-3 & 9 & 2.202 & 0.970 & 0.937
			& 7.761e-3 & 9 & 2.202 & 0.970 & 0.937
			& 1.084e-2 & 10 & \textbf{2.072} & 0.967 & 0.925
			& 1.313e-2 & 11 & 2.469 & 0.923 & 0.884
			& \textbf{2.677e-3} & \textbf{2}
			& 2.210 & \textbf{0.993} & \textbf{1.000} \\
			
			0 & 300 & 40\%
			& 6.014e-3 & 7 & 2.911 & 0.977 & 0.911
			& 6.014e-3 & 7 & 2.911 & 0.977 & 0.911
			& 1.009e-2 & 8 & 2.818 & 0.965 & 0.918
			& 1.597e-2 & 9 & 3.099 & 0.963 & 0.900
			& \textbf{2.175e-3} & \textbf{4}
			& \textbf{2.890} & \textbf{0.987} & \textbf{1.000} \\
			
			\hline
			
			1 & 300 & 0
			& 3.667e-3 & 6 & 1.703 & 0.980 & 0.946
			& 3.667e-3 & 6 & 1.703 & 0.980 & 0.946
			& 3.667e-3 & 6 & 1.703 & 0.980 & 0.946
			& 4.480e-3 & 7 & 1.710 & 0.967 & 0.923
			& \textbf{3.544e-3} & \textbf{3}
			& \textbf{1.401} & \textbf{0.990} & \textbf{0.970} \\
			
			1 & 300 & 10\%
			& 4.994e-3 & 9 & 1.735 & 0.970 & 0.937
			& 4.994e-3 & 9 & 1.735 & 0.970 & 0.937
			& 4.994e-3 & 9 & 1.735 & 0.970 & 0.937
			& 6.566e-3 & 7 & 1.831 & 0.970 & 0.926
			& \textbf{6.203e-3} & \textbf{3}
			& \textbf{1.711} & \textbf{0.990} & \textbf{1.000} \\
			
			1 & 300 & 20\%
			& 6.462e-3 & 6 & 2.175 & 0.980 & 0.946
			& 6.462e-3 & 6 & 2.175 & 0.980 & 0.946
			& 6.680e-3 & 5 & 2.229 & 0.987 & 0.957
			& 6.347e-3 & 7 & 2.684 & 0.943 & 0.924
			& \textbf{1.263e-3} & \textbf{1}
			& \textbf{2.156} & \textbf{0.997} & \textbf{1.000} \\
			
			1 & 300 & 30\%
			& 5.380e-3 & 7 & 2.894 & 0.954 & 0.921
			& 5.380e-3 & 7 & 2.894 & 0.954 & 0.921
			& 5.380e-3 & 7 & 2.894 & 0.954 & 0.921
			& 1.163e-2 & 8 & 3.473 & 0.937 & 0.893
			& \textbf{4.553e-3} & \textbf{2}
			& \textbf{2.032} & \textbf{0.993} & \textbf{1.000} \\
			
			1 & 300 & 40\%
			& 7.365e-3 & 7 & 3.293 & 0.963 & 0.934
			& 7.365e-3 & 7 & 3.293 & 0.963 & 0.934
			& 7.365e-3 & 7 & 3.293 & 0.963 & 0.934
			& 1.828e-2 & 8 & 3.770 & 0.921 & 0.885
			& \textbf{1.788e-3} & \textbf{5}
			& \textbf{3.110} & \textbf{0.983} & \textbf{1.000} \\
			
			\hline
		\end{tabular}%
		\endgroup
	\end{sidewaystable}
	
	\begin{sidewaystable}[!p]
		\centering
		\caption{Simulation performance under moderate contamination with the subcomposition constraint
			$(n=200,\ p=300)$.}
		\label{tab:C2}
		
		\fontsize{7}{8}\selectfont
		\setlength{\tabcolsep}{0.25pt}
		\renewcommand{\arraystretch}{1.5}
		\setlength{\extrarowheight}{1.2pt}
		
		\begingroup
		\begin{tabular}{ccc|ccccc|ccccc|ccccc|ccccc|ccccc}
			\hline
			
			\multirow{2}{*}{$L$}
			& \multirow{2}{*}{$p$}
			& \multirow{2}{*}{$\mathcal{O}$}
			& \multicolumn{5}{c|}{\textbf{[RobRegCC-A]}}
			& \multicolumn{5}{c|}{\textbf{[RobRegCC-H]}}
			& \multicolumn{5}{c|}{\textbf{[RobRegCC-E]}}
			& \multicolumn{5}{c|}{\textbf{[RobRegCC-NR]}}
			& \multicolumn{5}{c}{\textbf{[H-RegAd]}} \\
			
			\cline{4-8}
			\cline{9-13}
			\cline{14-18}
			\cline{19-23}
			\cline{24-28}
			
			& & &
			Er & HM & MAE & Acc & Prec
			& Er & HM & MAE & Acc & Prec
			& Er & HM & MAE & Acc & Prec
			& Er & HM & MAE & Acc & Prec
			& Er & HM & MAE & Acc & Prec \\
			
			\hline
			
			0 & 300 & 0
			& 6.667e-3 & 5 & 1.184 & 0.923 & 0.899
			& 6.433e-3 & 5 & 1.184 & 0.923 & 0.899
			& 6.667e-3 & 5 & 1.184 & 0.923 & 0.899
			& 6.187e-3 & 5 & 1.140 & 0.953 & 0.919
			& \textbf{1.020e-3} & \textbf{3}
			& \textbf{0.997} & \textbf{0.990} & \textbf{0.986} \\
			
			0 & 300 & 10\%
			& 5.150e-3 & 8 & \textbf{1.598} & 0.943 & 0.913
			& 5.150e-3 & 8 & 1.702 & 0.943 & 0.913
			& 5.150e-3 & 8 & 1.702 & 0.943 & 0.913
			& 6.704e-3 & 5 & 1.742 & 0.936 & 0.899
			& \textbf{3.364e-3} & \textbf{2}
			& 1.602 & \textbf{0.993} & \textbf{1.000} \\
			
			0 & 300 & 20\%
			& 7.507e-3 & 8 & 1.849 & 0.936 & 0.908
			& 5.450e-3 & 8 & 1.846 & 0.936 & 0.908
			& 5.450e-3 & 8 & 1.846 & 0.936 & 0.908
			& 1.120e-2 & 5 & 2.035 & 0.912 & 0.887
			& \textbf{1.122e-3} & \textbf{3}
			& \textbf{1.842} & \textbf{0.990} & \textbf{0.986} \\
			
			0 & 300 & 30\%
			& 9.157e-3 & 10 & 2.264 & 0.967 & 0.933
			& 9.157e-3 & 9 & 2.173 & 0.967 & 0.933
			& 9.157e-3 & 9 & 2.173 & 0.967 & 0.933
			& 1.816e-2 & 9 & 2.422 & 0.946 & 0.915
			& \textbf{3.512e-3} & \textbf{2}
			& \textbf{2.094} & \textbf{0.987} & \textbf{0.968} \\
			
			0 & 300 & 40\%
			& 7.070e-3 & 4 & 2.825 & 0.977 & 0.945
			& 7.070e-3 & 5 & 2.835 & 0.977 & 0.945
			& 7.070e-3 & 5 & 2.835 & 0.977 & 0.945
			& 1.430e-2 & 7 & 3.062 & 0.917 & 0.879
			& \textbf{1.858e-3} & \textbf{4}
			& \textbf{2.029} & \textbf{0.993} & \textbf{1.000} \\
			
			\hline
			
			1 & 300 & 0
			& 4.589e-3 & 13 & 1.421 & 0.957 & 0.945
			& 4.589e-3 & 13 & 1.421 & 0.957 & 0.945
			& 5.145e-3 & 13 & 1.455 & 0.957 & 0.945
			& 1.872e-2 & 7 & 2.727 & 0.957 & 0.938
			& \textbf{1.911e-3} & \textbf{2}
			& \textbf{1.211} & \textbf{0.978} & \textbf{0.951} \\
			
			1 & 300 & 10\%
			& 8.152e-3 & 17 & 1.506 & 0.943 & 0.922
			& 8.152e-3 & 19 & 1.487 & 0.943 & 0.922
			& 8.152e-3 & 19 & 1.487 & 0.943 & 0.922
			& 6.376e-2 & 16 & 1.859 & 0.933 & 0.876
			& \textbf{5.822e-3} & \textbf{8}
			& \textbf{1.300} & \textbf{0.973} & \textbf{0.949} \\
			
			1 & 300 & 20\%
			& 3.842e-3 & 12 & 1.996 & 0.976 & 0.942
			& 3.842e-3 & 13 & 1.996 & 0.976 & 0.942
			& 3.842e-3 & 13 & 1.996 & 0.976 & 0.942
			& 1.713e-2 & 10 & 2.245 & 0.926 & 0.887
			& \textbf{1.409e-3} & \textbf{7}
			& \textbf{1.752} & \textbf{0.987} & \textbf{0.976} \\
			
			1 & 300 & 30\%
			& 6.302e-3 & 15 & 2.010 & 0.950 & 0.922
			& 6.253e-3 & 16 & 1.963 & 0.950 & 0.922
			& 6.253e-3 & 16 & \textbf{1.963} & 0.950 & 0.922
			& 1.628e-2 & 9 & 2.369 & 0.918 & 0.864
			& \textbf{1.553e-3} & \textbf{2}
			& 1.972 & \textbf{0.993} & \textbf{1.000} \\
			
			1 & 300 & 40\%
			& 7.404e-3 & 13 & 2.462 & 0.957 & 0.934
			& 7.404e-3 & 14 & 2.462 & 0.957 & 0.934
			& 7.404e-3 & 14 & 2.462 & 0.957 & 0.934
			& 1.636e-2 & 7 & 1.727 & 0.918 & 0.855
			& \textbf{3.865e-3} & \textbf{2}
			& \textbf{2.421} & \textbf{0.993} & \textbf{1.000} \\
			
			\hline
		\end{tabular}%
		\endgroup
	\end{sidewaystable}
	
	While Tables~\ref{tab:C1} and~\ref{tab:C2} report the average performance of each method, we also investigate the stability of all the methods across repeated simulations. Specifically, we present boxplots of HM, Er, and MAE under different outlier proportions for both the zero-sum and subcomposition constraints in Figure~\ref{fig:boxplots}.
	From Figure \ref{fig:boxplots}, we observe that HM, Er and MAE generally increase as the contamination proportion becomes larger across all methods, which indicates that heavier contamination poses greater challenges for both outlier identification and predictive performance. In contrast, H-RegAd consistently maintains relatively low HM and Er under both constraint structures, which indicates that H-RegAd achieves greater stability in outlier detection and coefficient recovery. 
	Overall, although MAE increases as contamination becomes more severe, H-RegAd remains competitive in prediction accuracy in most settings. Moreover, we observe that H-RegAd is consistently less sensitive to contaminated observations than the method RobRegCC-NR, which further confirms its robustness.

	
	\begin{figure}[!htbp]
		\centering
		\includegraphics[width=0.65\linewidth]{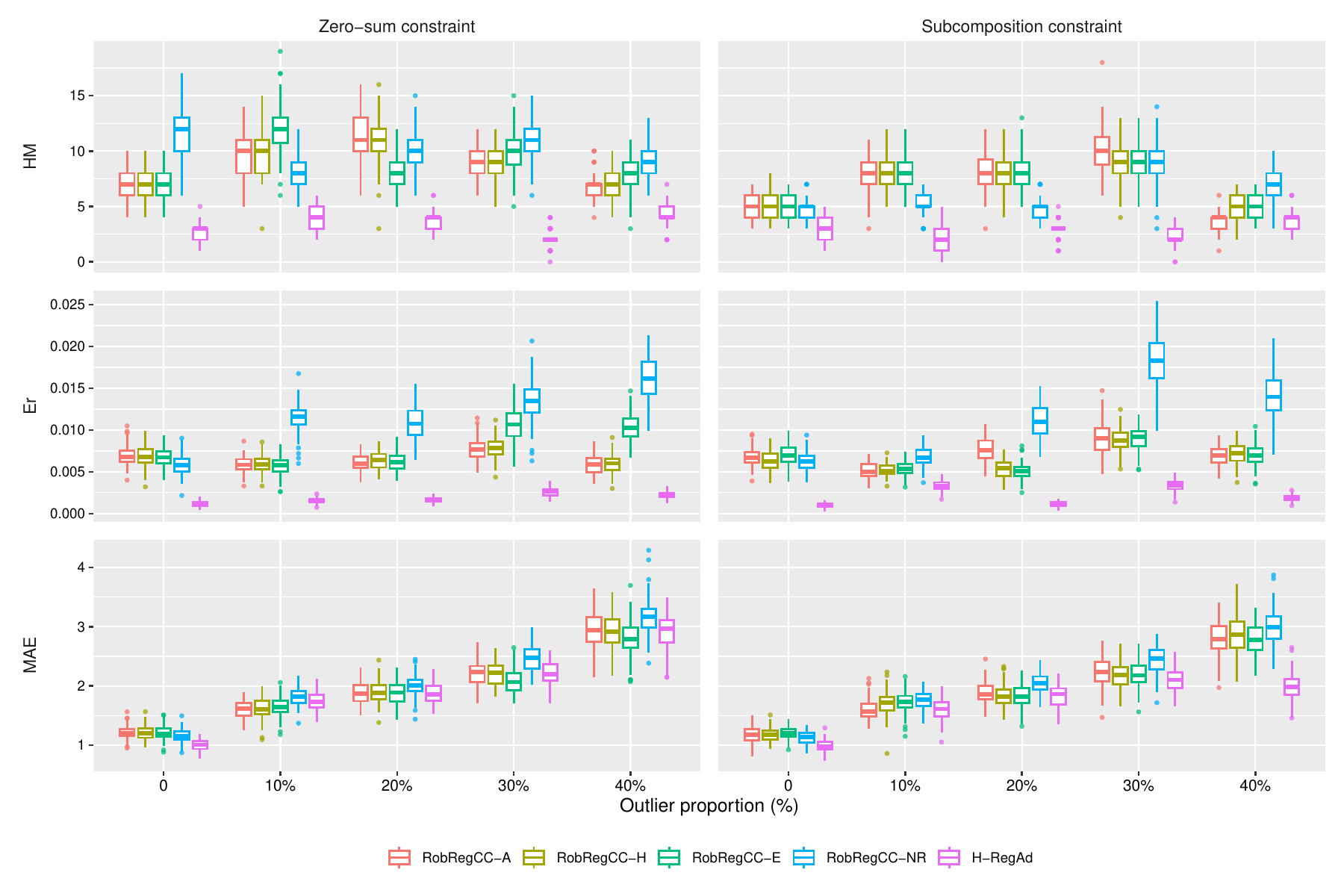}
		\caption{Simulation performance of H-RegAd and the competing methods
			in terms of HM, Er, and MAE ($L=0$, $n=200$, and $p=300$).}
		\label{fig:boxplots}
	\end{figure}
	
	
	To gain further insight into the effect of contamination on each method, we plot the trajectories of HM, Er, and MAE as functions of the outlier proportion under both the zero-sum and subcomposition constraints, as shown in  Figure~\ref{fig:traj_main}.
	As the contamination proportion increases from $0\%$ to $40\%$, we observe from the third row that HM and MAE tend to increase for several methods, which indicates that heavier contamination can make outlier identification and prediction more difficult. 
	In contrast, from the first and the second rows, we see that H-RegAd maintains relatively low HM and Er under both constraint structures, which suggests that it exhibits more stable outlier detection and coefficient estimation. 
	In a word, we conclude that these trajectory plots are consistent with Tables~\ref{tab:C1} and~\ref{tab:C2} and once again support the robustness of H-RegAd.
	
	\begin{figure}[!htbp]
		\centering
		\includegraphics[width=0.75\linewidth,height=0.82\textheight,keepaspectratio]{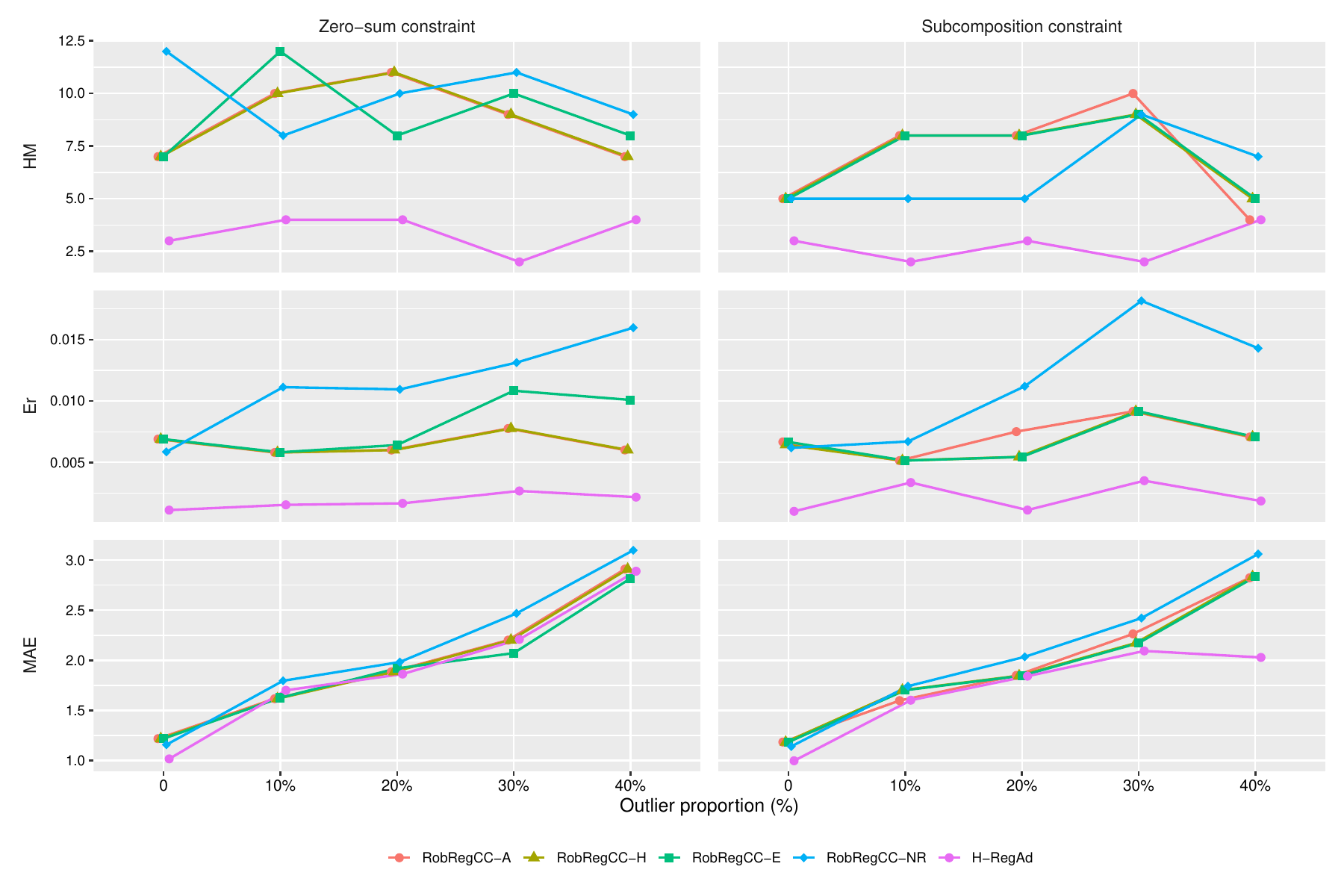}
		\caption{Trajectories of HM, Er and MAE under $L=0$ with zero-sum and subcomposition constraints.}
		\label{fig:traj_main}
	\end{figure}

	To further complement the performance comparisons in Figures \ref{fig:boxplots} and \ref{fig:traj_main}, we examine the convergence behavior of H-RegAd by plotting the dual KKT residuals under representative settings in Figure~\ref{fig:dualKKT}. 
	In this figure, the two columns correspond to the zero-sum and subcomposition constraints, while the two rows distinguish cases without and with leverage points. 
	In each plot, the dashed line indicates the stopping tolerance $10^{-4}$.
	Across all settings, we observe that the residual decreases rapidly in the early iterations and then stabilizes toward the prescribed tolerance. 
	We conclude that the employed sGS-ADMM algorithm used to implement H-RegAd  converges stably in practice under both constraint settings, even in the presence of influential contamination.
	
	\begin{figure}[!htbp]
		\centering
		\includegraphics[width=0.75\linewidth]{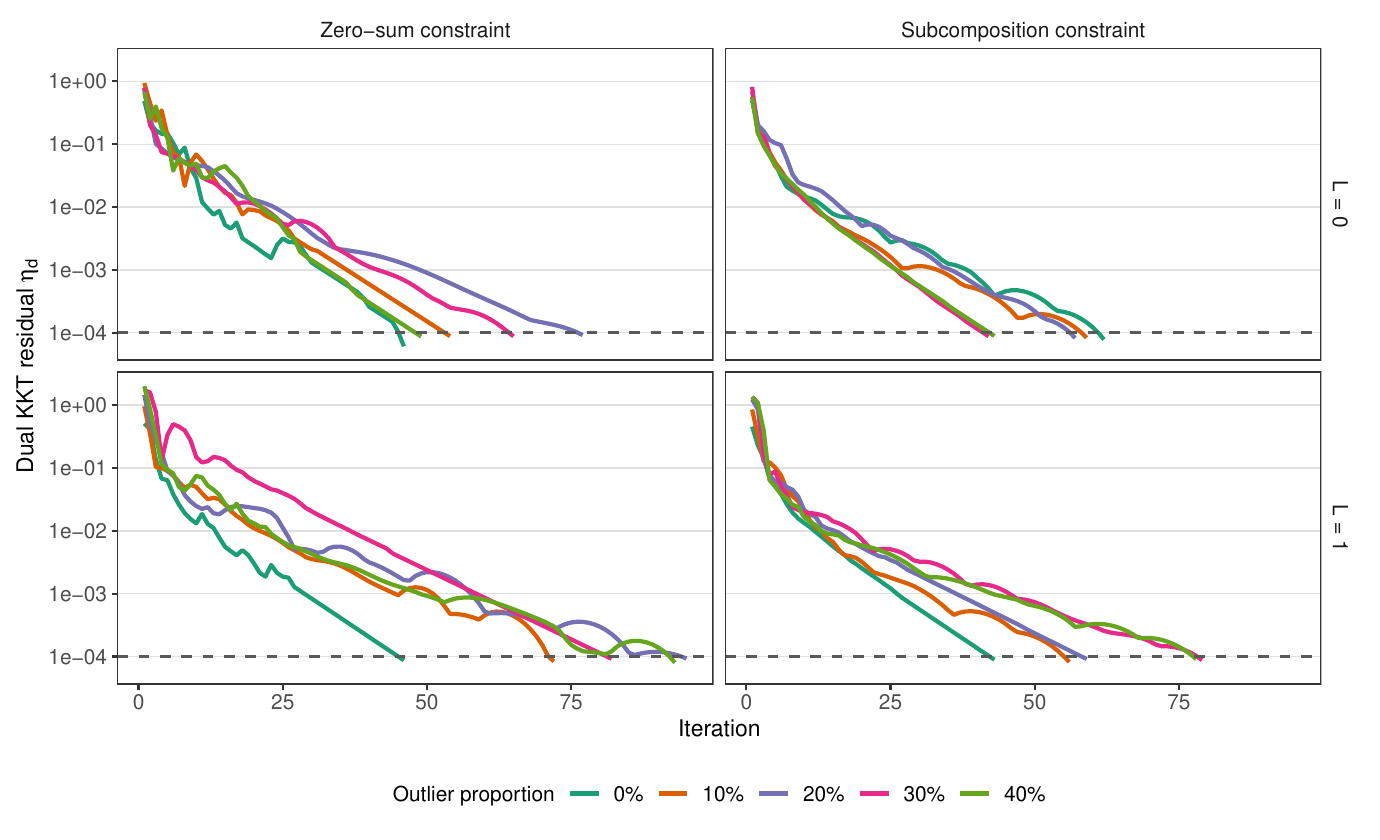}
		\caption{Convergence profiles of the dual KKT residual $\eta_d$ for
			the sGS-ADMM algorithm.}
		\label{fig:dualKKT}
	\end{figure}
	
	\section{Real Data Analysis: Gut Microbiome Data}
	\label{sec:realdata}
	In this section, we evaluate the practical performance of H-RegAd on a gut microbiome dataset of HIV patients. In this dataset, the response variable $y$ is soluble CD14 (sCD14), an immune marker linked to chronic inflammation and monocyte activation, 
	and the predictors are genus-level microbial relative abundances, with coefficient vector $\beta$.
	The dataset consists of $n = 151$ observations and $p = 60$ bacterial genera features.
	We note that this dataset has been adopted in \cite{rivera2018balances} as a benchmark example for the balance selection method selbal, which is a greedy step-by-step algorithm for sparse log-contrast modeling. For this reason, this real dataset serves as a good test case to verify whether our proposed robust framework can enhance model fitting and variable selection performance when atypical data observations exist.
	
	\subsection{Data Preprocessing and Model Formulation}
	The microbiome covariates in this dataset are compositional and contain zero values. Therefore, a standard preprocessing procedure was applied before model fitting. In this work, we replaced zero entries by $0.5$, after which each sample was normalized so that the components summed to one.
	Let $X \in \mathbb{R}^{n \times 60}$ denote the compositional data matrix. We construct the regression design matrix as
	$\widetilde{X} = \log(X)$.
	To account for the simplex structure of the predictors, we impose a zero-sum log-contrast constraint
	$\mathbf{1}_p^{\mathsf T}\beta = 0$.
	For the proposed method, the tuning parameter $\lambda$ was chosen by 10-fold cross-validation based on a scaled Huber loss.
	
	\subsection{Model Fitting and Predictive Performance}
	In this section, we evaluate model performance from three complementary perspectives: full-data goodness of fit, out-of-sample prediction accuracy, and model complexity.
	For goodness of fit, we use the coefficient of determination
	\[
	R^2 := 1 - \frac{\sum_{i=1}^n (y_i - \widehat{y}_i)^2}{\sum_{i=1}^n (y_i - \bar{y})^2},
	\]
	where $\bar{y} = \left ( 1/n \right ) \sum_{i=1}^n y_i$ denotes the sample mean of the response. The $R^2$ statistic measures the proportion of total variability in the response that is explained by the fitted model.
	Prediction accuracy is evaluated using the mean absolute error (MAE) and the median absolute error (MedAE), defined respectively as
\[
\mathrm{MAE}:=\frac{1}{n}\sum_{i=1}^n|y_i-\widehat{y}_i|,\qquad
\mathrm{MedAE}:=\mathrm{median}\Bigl\{|y_i-\widehat{y}_i|:i=1,\dots,n\Bigr\}.
\]
    Here, MAE measures the average magnitude of the residual errors, while
    MedAE is more robust to extreme residuals and thus better reflects the
    typical fitting error.
    Furthermore, to assess sparsity and outlier adjustment across each
    method, we report the number of selected taxa and potential outliers,
    which are used to reflect model complexity and the degree of outlier
    correction, respectively.
    
    	The above measures are first calculated on the full dataset to
    	characterize the goodness-of-fit of the fitted models. To further
    	evaluate out-of-sample predictive performance, we conduct a repeated
    	hold-out analysis following \citep{Mishra2022RobustCompositional}.
    	Specifically, the 151 observations are randomly split 100 times into
    	90\% training samples and 10\% test samples. All competing methods are
    	evaluated using the same random partitions, while model fitting and
    	parameter selection are performed using only the corresponding
    	training samples.
    In Table~\ref{tab:hiv_summary}, we summarize the full-data fitting
    performance, sparsity, and potential outlier detection results for all
    competing methods on the HIV microbiome dataset.
	
	\begin{table}[t]
		\centering
		\caption{Comparison of model fitting performance, sparsity, and
			potential outlier detection on the HIV microbiome dataset.}
		\label{tab:hiv_summary}
		
		\small
		\setlength{\tabcolsep}{3pt}
		\renewcommand{\arraystretch}{1.10}
		
		\begin{tabularx}{\textwidth}{
				@{}
				>{\raggedright\arraybackslash}p{0.21\textwidth}
				*{5}{>{\centering\arraybackslash}X}
				@{}
			}
			\toprule
			Method 
			& \shortstack{Selected\\Taxa}
			& \shortstack{Potential\\Outliers}
			& $R^2$ 
			& MAE 
			& MedAE \\
			\midrule
			
			RobRegCC-NR 
			& 10 
			& 0  
			& 0.274 
			& 1747.588 
			& 1431.693 \\
			
			RobRegCC-A  
			& 9  
			& 14 
			& 0.626 
			& 1416.271 
			& 1314.254 \\
			
			RobRegCC-E  
			& 9  
			& 16 
			& 0.533 
			& 1547.607 
			& 1383.283 \\
			
			RobRegCC-H  
			& 8  
			& 7  
			& 0.567 
			& 1476.908 
			& \textbf{1178.599} \\
			
			H-RegAd     
			& 16 
			& 30 
			& \textbf{0.687} 
			& \textbf{1295.068} 
			& 1219.779 \\
			
			\bottomrule
		\end{tabularx}
		
	\end{table}
	
	From Table~\ref{tab:hiv_summary}, we observe that H-RegAd provides the
	best full-data goodness-of-fit, yielding the largest $R^2$ and the
	smallest MAE among all methods.
	This suggests that H-RegAd provides a closer fit to the observed sCD14
	levels on the full dataset.
	In terms of median fitting error, RobRegCC-H attains the smallest MedAE,
	indicating good performance for typical observations.
	Regarding model complexity, H-RegAd selects more taxa and flags more
	potential outliers than the competing methods, suggesting that the
	improved full-data fit is accompanied by reduced sparsity and a more
	aggressive outlier-adjustment mechanism.
	
    To directly compare with the non-robust baseline, we plot the observed
    and fitted sCD14 values for H-RegAd and RobRegCC-NR in
    Figure~\ref{fig:hiv_fit}.
    In Figure~\ref{fig:hiv_fit}, the red solid line represents the fitted
    trend from H-RegAd, while the blue dashed line corresponds to NR.
    The red triangles indicate observations flagged as potential outliers
    by H-RegAd.
    Overall, the fitted values from H-RegAd follow the observed sCD14 levels
    more closely, particularly in the upper range.
    This is consistent with its higher full-data $R^2$ and lower full-data
    MAE reported in Table~\ref{tab:hiv_summary}.
    These results indicate that H-RegAd provides a closer fit to the
    observed data while accommodating observations that deviate from the
    dominant regression pattern.
	
	\begin{figure}[!htbp]
		\centering
		\includegraphics[width=0.65\linewidth]{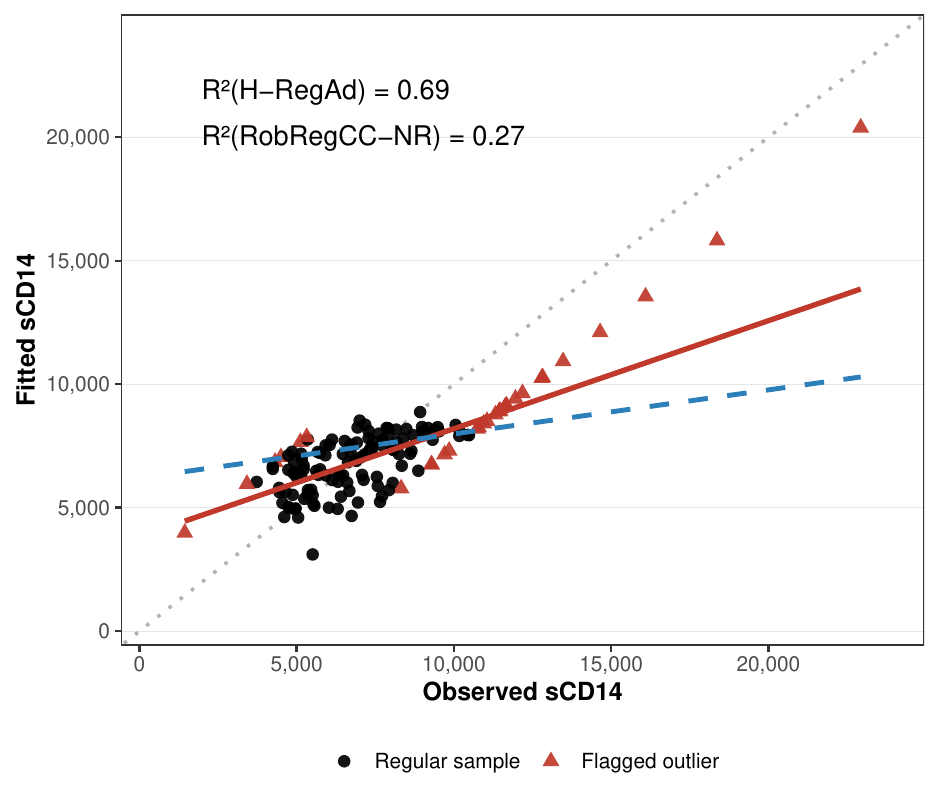}
		\caption{Observed versus fitted sCD14 values for H-RegAd and RobRegCC-NR on the HIV gut microbiome dataset.}
		\label{fig:hiv_fit}
	\end{figure}
	
	\subsection{Model-Selected Genera and Biological Interpretation}
	In this section, we summarize the microbial genera screened by the proposed H-RegAd method and explain the biological meaning of their corresponding regression coefficients.
	Under H-RegAd, we select a total of $16$ genera, among which $5$ yielded positive coefficients while the remaining $11$ yielded negative coefficients.
	Furthermore, in Table \ref{tab:hiv_genera}, we summarize representative genera selected by the proposed H-RegAd method, along with their regression coefficients and cross-model screening results.
	In addition, since we fit the model under the zero-sum log-contrast constraint, we therefore interpret these coefficients on the relative abundance scale rather than regarding them as simple marginal effects of absolute abundances.

	\begin{table}[!t]
		\centering
		\caption{Representative genera selected by H-RegAd}
		\label{tab:hiv_genera}
		\small
		\renewcommand{\arraystretch}{1.10}
		\setlength{\tabcolsep}{5pt}
		
		\begin{tabularx}{\textwidth}{
				@{}
				>{\raggedright\arraybackslash}X
				>{\centering\arraybackslash}p{0.16\textwidth}
				>{\raggedright\arraybackslash}p{0.36\textwidth}
				@{}
			}
			\toprule
			Taxon & Coefficient & Selected by \\
			\midrule
			
			g\_\textit{Collinsella}
			& $-629.470$
			& RobRegCC-NR, RobRegCC-A, RobRegCC-E, RobRegCC-H, H-RegAd \\
			
			g\_\textit{Faecalibacterium}
			& $353.445$
			& RobRegCC-NR, RobRegCC-A, RobRegCC-E, H-RegAd \\
			
			f\_\textit{Lachnospiraceae}\_g\_\textit{Incertae Sedis}
			& $260.933$
			& RobRegCC-NR, RobRegCC-A, RobRegCC-E, RobRegCC-H, H-RegAd \\
			
			f\_\textit{Ruminococcaceae}\_g\_unclassified
			& $258.224$
			& RobRegCC-NR, H-RegAd \\
			
			g\_\textit{Bacteroides}
			& $218.977$
			& H-RegAd \\
			
			o\_\textit{Clostridiales}\_g\_unclassified
			& $-111.576$
			& H-RegAd \\
			
			f\_\textit{Porphyromonadaceae}\_g\_\allowbreak unclassified
			& $-110.157$
			& H-RegAd \\
			
			g\_\textit{Prevotella}
			& $82.402$
			& H-RegAd \\
			
			\bottomrule
		\end{tabularx}
	\end{table}
	
   We observe from Table~\ref{tab:hiv_genera} that the selected microbial
   taxa exhibit distinct association patterns with sCD14.
   Specifically, g\_\textit{Collinsella} yields the strongest negative
   coefficient ($-629.470$), while g\_\textit{Faecalibacterium} has the
   largest positive coefficient ($353.445$).
   We also observe that H-RegAd selects several microbial taxa that are
   consistently identified by the competing methods, including
   g\_\textit{Collinsella} and
   f\_\textit{Lachnospiraceae}\_g\_\textit{Incertae\_Sedis}, suggesting
   that these associations are relatively stable across different model
   specifications.
   In addition, H-RegAd uniquely selects four candidate taxa,
   \textit{g\_Bacteroides},
   o\_\textit{Clostridiales}\_g\_\textit{unclassified},
   f\_\textit{Porphyromonadaceae}\_g\_\textit{unclassified}, and
   g\_\textit{Prevotella}, which are not selected by the other competing
   methods.
   These results suggest that H-RegAd may capture additional candidate
   microbial associations with sCD14 while retaining several signals that
   are also identified by the comparison methods.
	
	To fully display the regression coefficients of the genera selected by
	H-RegAd, we construct the horizontal bar plot shown in
	Figure~\ref{fig:hiv_selected_genera}, with red bars denoting positive
	coefficients and blue bars denoting negative coefficients.
	We observe that taxa with negative coefficients account for the majority
	of the selected features, indicating that H-RegAd identifies a relatively
	large number of taxa with negative associations with sCD14 in the
	relative abundance space.
	Consistent with the results reported above, H-RegAd selects a relatively
	diverse set of microbial features while retaining several taxa that are
	also identified by the competing methods.
	
	\begin{figure}[!t]
		\centering
		\includegraphics[width=0.65\linewidth]{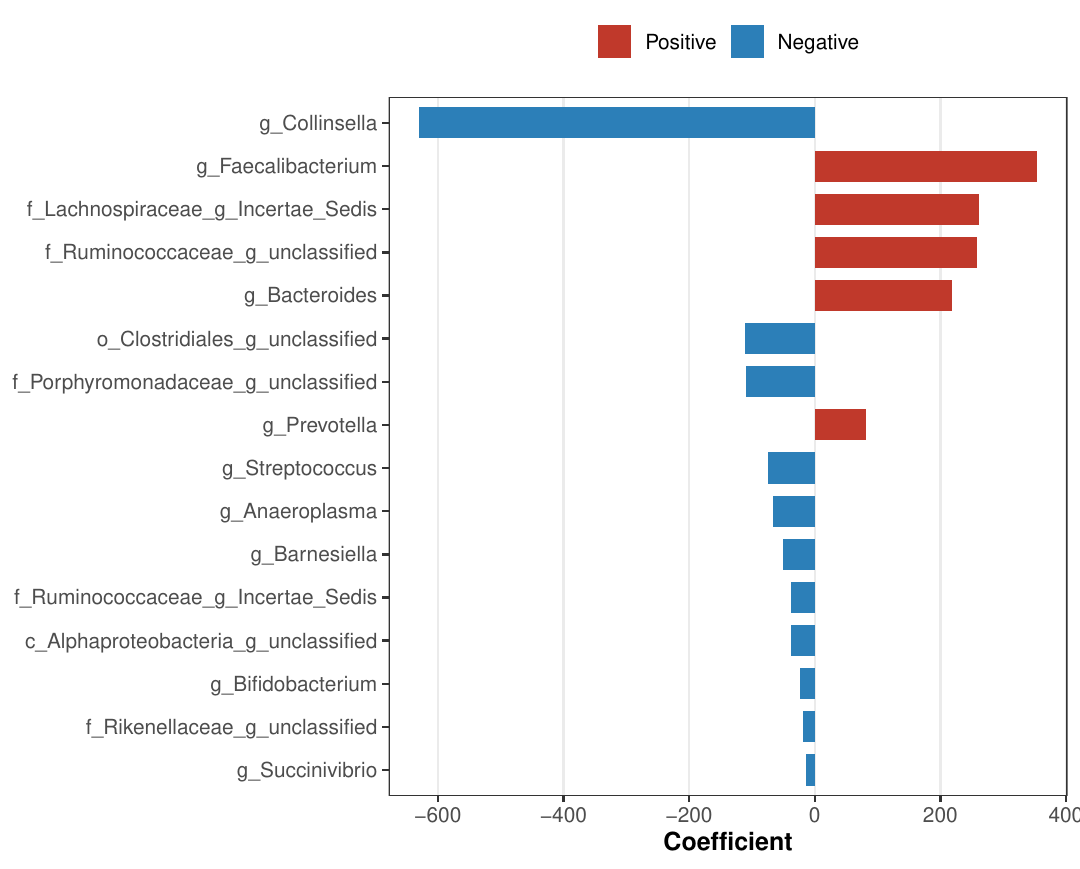}
		\caption{Estimated coefficients of genera selected by H-RegAd on the HIV microbiome data.}
		\label{fig:hiv_selected_genera}
	\end{figure}
	
	\subsection{Sensitivity Analysis to Pseudocount Selection}
	To assess whether the empirical findings are sensitive to the choice
	of zero replacement, we repeated the H-RegAd analysis using
	pseudocounts of $0.1$, $0.5$, and $1.0$ prior to total-sum
	normalization. Except for the pseudocount value, all other
	preprocessing steps, model settings, and tuning procedures were kept
	unchanged. The pseudocount of $0.5$ corresponds to the preprocessing
	scheme used in the main analysis.

\begin{table}[!t]
	\centering
	\caption{Sensitivity analysis of H-RegAd under different pseudocounts.}
	\label{tab:pseudocount_sensitivity}
	
	\footnotesize
	\setlength{\tabcolsep}{3pt}
	\renewcommand{\arraystretch}{1.15}
	
	\begin{tabularx}{\textwidth}{
			@{}
			*{6}{>{\centering\arraybackslash}X}
			@{}
		}
		\toprule
		Pseudocount
		& \shortstack{Selected\\taxa}
		& $R^2$
		& MAE
		& \shortstack{Overlap\\with $0.5$}
		& \shortstack{Sign\\consistency} \\
		\midrule
		
		0.1 & 18 & 0.706 & 1237.869 & 14/16 & 100\% \\
		0.5 & 16 & 0.687 & 1295.068 & 16/16 & 100\% \\
		1.0 & 13 & 0.613 & 1419.053 & 12/16 & 100\% \\
		
		\bottomrule
	\end{tabularx}
\end{table}
	
	As shown in Table~\ref{tab:pseudocount_sensitivity}, H-RegAd selected
	18, 16, and 13 taxa under pseudocounts of $0.1$, $0.5$, and $1.0$,
	respectively. Relative to the baseline pseudocount of $0.5$, 14 of the
	16 selected taxa were retained when using $0.1$, whereas 12 were
	retained when using $1.0$. Importantly, the coefficient signs of all
	commonly selected taxa were fully consistent across the three
	pseudocount settings. The coefficient correlations were also relatively
	high between $0.1$ and $0.5$ ($0.937$) and between $0.5$ and $1.0$
	($0.886$). Although the degree of sparsity and the full-data fitting
	measures varied to some extent with the pseudocount, the core selected
	taxa and the directions of their estimated associations remained
	largely consistent.
	
	Finally, we conduct repeated subsampling tests to further verify the stability of the selected genera under sample perturbation. The corresponding results are presented in the Appendix: the core genera identified using the full dataset can be consistently detected across most subsamples. Full details of this supplementary analysis are provided in  \ref{Appensec}.

	\section{Conclusions}
	\label{sec:discussion}
	In this paper, we propose H-RegAd for robust sparse regression with compositional covariates under linear structural constraints.
	The proposed method combines the Huber loss function with a Lasso penalty, which allows it to perform variable selection while resisting atypical observations.
	The framework accommodated both zero-sum log-contrast constraints and more general subcomposition constraints.
	We derived the dual problem corresponding to the constrained optimization objective of H-RegAd and adopted the sGS-ADMM algorithm to efficiently solve this dual formulation.
	Furthermore, we conducted extensive numerical experiments on both simulated and real datasets to demonstrate the efficacy of the proposed method.
	In simulation studies, we found that H-RegAd achieved high estimation accuracy, competitive prediction performance, and reliable outlier identification under various contamination proportions and constraint settings.
	In the HIV microbiome analysis,  H-RegAd achieved the best overall fit and identified an interpretable set of microbial biomarkers, and complementary stability analyses further confirm the reproducibility of the main selected genera.
	In a word, we believe that the proposed H-RegAd offered an effective and interpretable framework for robust regression analysis with compositional predictors subject to structural linear constraints.
	
	\section*{Statements and Declarations}
	
	\subsection*{Acknowledgements}
	The work of Yunhai Xiao is supported by the National Natural Science
	Foundation of China (Grant Nos. 12471307 and 12271217).
	
	\subsection*{Conflict of Interest Statement}
	The authors declare that they have no relevant financial or
	non-financial interests to disclose.
	
	\subsection*{Data Availability}
	The gut microbiome dataset analyzed in this study is publicly available
	from the GitHub repository accompanying Mishra and Müller (2022) at
	\url{https://github.com/amishra-stats/robregcc}.
	The dataset was previously analyzed by Rivera-Pinto et al. (2018).
	No new empirical data were collected for this study.
	
	\subsection*{Code Availability}
	The R implementation of H-RegAd, together with the code used for the
	numerical experiments, is publicly available at
	\url{https://github.com/XukeHua/H-RegAd}.
	
	\subsection*{Author Contributions}
	All authors contributed to the study conception and design.
	All authors read and approved the final work.
	
\bibliographystyle{unsrtnat}
\bibliography{csda-references}

	\newpage
	\clearpage
	\appendix
\section{Additional Results}\label{Appensec}
	\addcontentsline{toc}{section}{Appendix}
	\subsection{Additional Performance Results and Convergence Diagnostics}
	Figure \ref{fig:boxplotsL1} presents the boxplots of HM, Er, and MAE under	$L=1$, where leverage points are introduced. 
	From Figure \ref{fig:boxplotsL1}, we observe that HM, Er, and MAE generally	increase as the contamination proportion becomes larger across several	methods. 
	Compared with the case of $L=0$, we see that the variability of the competing	methods becomes more pronounced, particularly under the subcomposition
	constraint, which indicates that leverage points pose additional challenges	for outlier identification, coefficient recovery, and prediction. 
	In	contrast, H-RegAd consistently maintains relatively low HM and Er under	both constraint structures. 
	Overall, we conclude that although MAE tends to increase as	the contamination becomes more severe, H-RegAd remains competitive in prediction accuracy in most settings, which further confirms its robustness in the presence of leverage points.
	\begin{figure}[!htbp]
		\centering
		\includegraphics[width=0.65\linewidth]{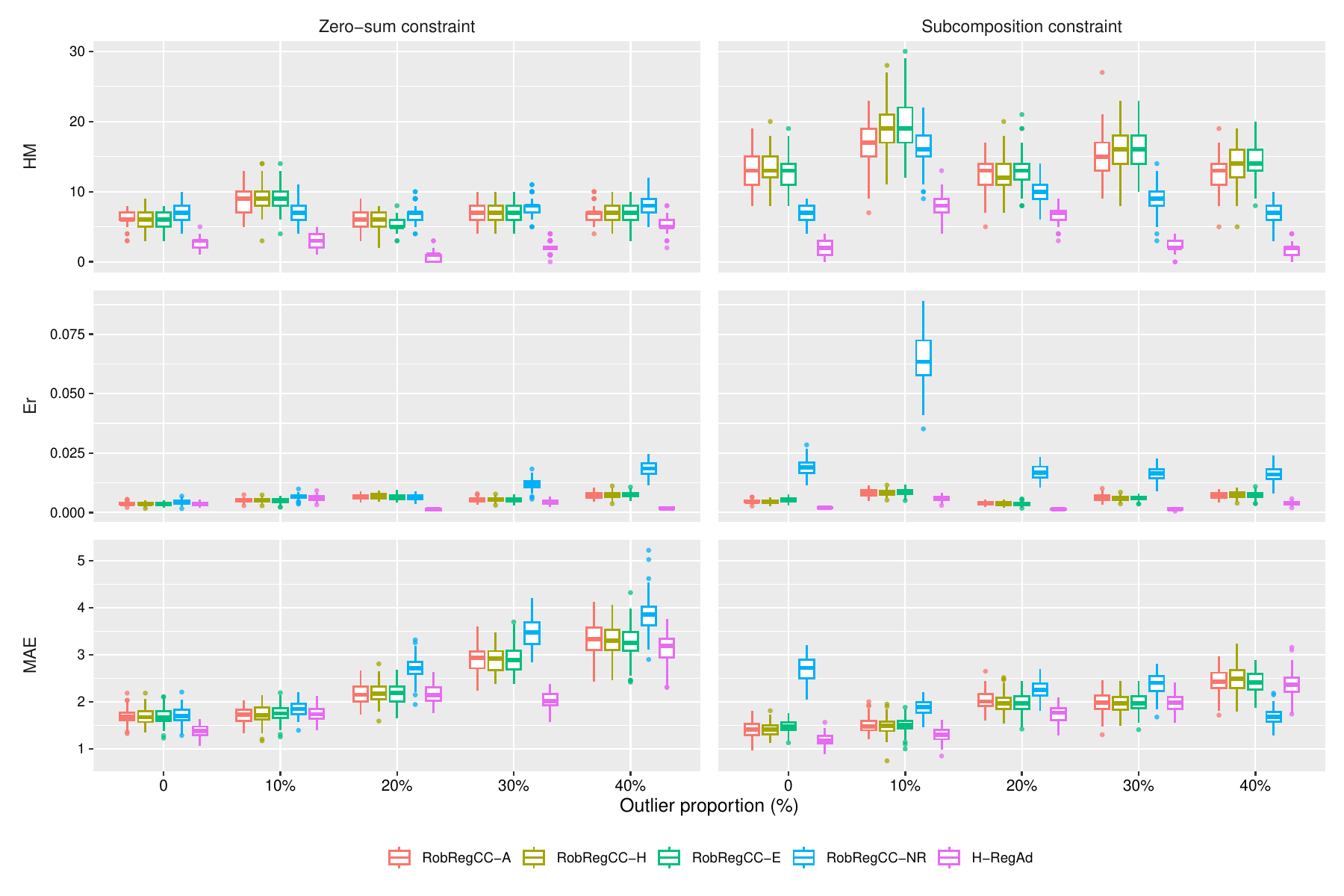}
		\caption{Simulation performance of H-RegAd and the competing methods
			in terms of HM, Er, and MAE ($L=1$, $n=200$, and $p=300$).}
		\label{fig:boxplotsL1}
	\end{figure}

	To gain further insight into the effect of leverage points on each method,	Figure~\ref{fig:traj_main1} presents the trajectories of HM, Er, and MAE
	as functions of the outlier proportion under $L=1$. 
	We see that, as the contamination proportion increases from $0\%$ to $40\%$, HM and MAE tend to increase for several competing methods, with more pronounced fluctuations under	the subcomposition constraint. 
	Compared with the case $L=0$, these	results indicate that leverage points pose additional challenges for outlier identification, coefficient estimation, and prediction. 
	In	contrast, H-RegAd maintains relatively low HM and Er under both constraint	structures and remains competitive in terms of MAE. 
	Overall, we find that the trajectory	plots are consistent with the boxplot results in	Figure \ref{fig:boxplotsL1} and provide further evidence of the robustness of H-RegAd in the presence of leverage points.
	\begin{figure}[!htbp]
		\centering
		\includegraphics[width=0.65\linewidth,height=0.82\textheight,keepaspectratio]{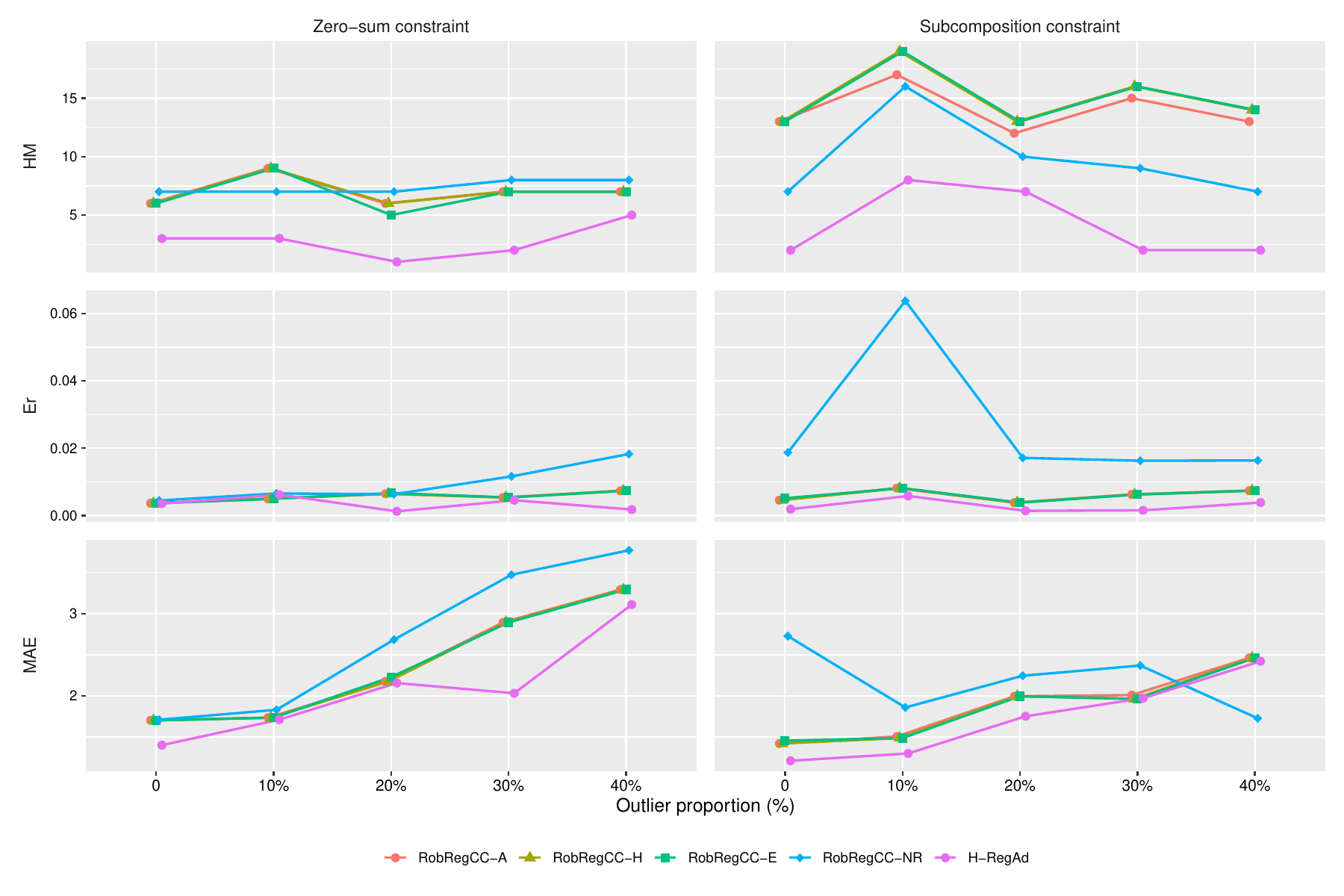}
		\caption{Trajectories of HM, Er, and MAE under \(L=1\) with zero-sum and subcomposition constraints.}
		\label{fig:traj_main1}
	\end{figure}
	
	\subsection{Stability Analysis of The Selected Genera}
	Because variable selection based on a single model fit may be sensitive to sampling variability, we further assess the stability of the selected genera through repeated subsampling. 
	Specifically, in this test, the random subsamples were repeatedly drawn from the HIV microbiome dataset, and H-RegAd was refitted to each subsample using the same modeling framework. 
	The selection frequency of each genus was then calculated as the proportion of subsamples in which it was selected. 
	This analysis helps distinguish genera that are consistently associated with the response from those selected only occasionally in the full-sample analysis.
	
	We summarize the representative genera with high selection frequencies under H-RegAd in Table \ref{tab:hiv_stability_app}. 
	In particular, we see that \textit{Collinsella}, \textit{Faecalibacterium}, \textit{Bacteroides}, and an unclassified \textit{Ruminococcaceae} taxon were selected in all subsamples, indicating that their inclusion is highly stable with respect to sampling variability. 
	Meanwhile, the remaining genera were also selected relatively frequently, with selection frequencies ranging from $0.802$ to $0.981$. 
	Moreover, we see that the coefficient signs of the frequently selected genera remained consistent across subsamples, suggesting that both variable selection and the direction of the estimated associations are stable under repeated sampling.
	\begin{table}[!t]
		\centering
		\caption{Representative genera with high selection frequencies under H-RegAd}
		\label{tab:hiv_stability_app}
		
		\footnotesize
		\setlength{\tabcolsep}{4pt}
		\renewcommand{\arraystretch}{1.05}
		
		\begin{tabularx}{\textwidth}{
				@{}
				>{\raggedright\arraybackslash}X
				>{\centering\arraybackslash}p{0.17\textwidth}
				>{\centering\arraybackslash}p{0.18\textwidth}
				@{}
			}
			\toprule
			Taxon
			& \shortstack{Selection\\frequency}
			& \shortstack{Mean\\coefficient} \\
			\midrule
			
			g\_\textit{Collinsella}
			& 1.000 & $-546.990$ \\
			
			g\_\textit{Faecalibacterium}
			& 1.000 & $383.401$ \\
			
			g\_\textit{Bacteroides}
			& 1.000 & $328.225$ \\
			
			unclassified \textit{Ruminococcaceae}
			& 1.000 & $244.661$ \\
			
			unclassified \textit{Porphyromonadaceae}
			& 0.981 & $-149.527$ \\
			
			\textit{Barnesiella}
			& 0.963 & $-53.641$ \\
			
			\textit{Prevotella}
			& 0.942 & $84.781$ \\
			
			unclassified \textit{Clostridiales}
			& 0.864 & $-95.951$ \\
			
			unclassified \textit{Alphaproteobacteria}
			& 0.844 & $-111.991$ \\
			
			\textit{Lachnospiraceae} \textit{Incertae Sedis}
			& 0.802 & $185.420$ \\
			
			\bottomrule
		\end{tabularx}
	\end{table}
	We display the selection frequencies of representative genera under repeated subsampling in Figure~\ref{fig:hiv_stability_app}, with the bar colors indicating the signs of their mean coefficients. 
	We see from this figure that several genera, including \textit{Collinsella}, \textit{Faecalibacterium}, \textit{Bacteroides}, and an unclassified \textit{Ruminococcaceae} taxon, are selected in all subsamples. 
	Meanwhile, most of the remaining genera also exhibit relatively high selection frequencies, whereas \textit{Aloprevotella} and \textit{Escherichia--Shigella} are selected less frequently. 
	In addition, both positive and negative coefficient signs are observed among the frequently selected genera, indicating that H-RegAd identifies stable microbial associations in both directions. 
	Overall, these results complement those in Table \ref{tab:hiv_stability_app} and provide further evidence that the selected microbial signals are robust to sampling variability.

	\begin{figure}[!htbp]
		\centering
		\includegraphics[width=0.65\linewidth]{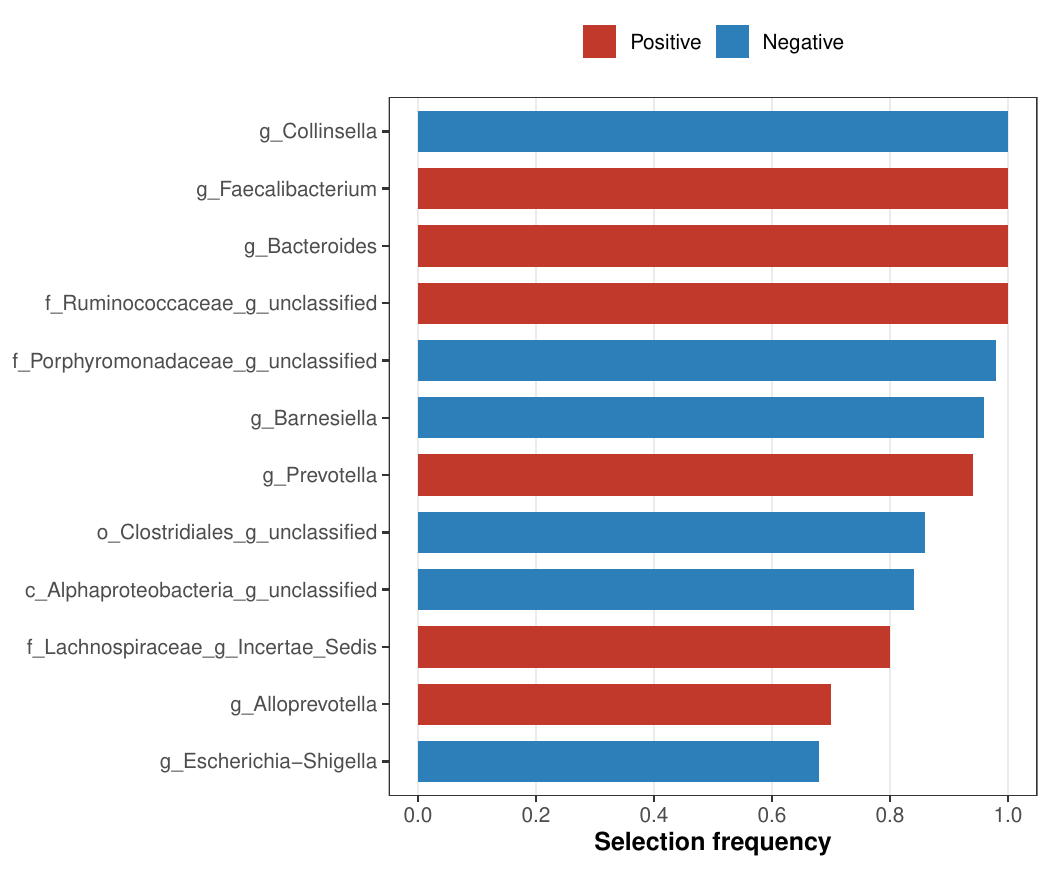}
		\caption{Selection frequencies of representative genera under repeated subsampling for H-RegAd.}
		\label{fig:hiv_stability_app}
	\end{figure}

\end{document}